\def\BibTeX{{\rm B\kern-.05em{\sc i\kern-.025em b}\kern-.08em T\kern-.1667em\lower.7ex\hbox{E}\kern-.125emX}}
\theoremstyle{definition}
\newtheorem{theorem}{Theorem}
\newtheorem{definition}{Definition}
\newtheorem{lemma}{Lemma}
\newtheorem{prop}{Proposition}
\newtheorem{cor}{Corollary}
\newtheorem{remark}{Remark}
\newtheorem{eg}{Example}
\newcommand{\abs}[1]{\left\lvert#1\right\rvert}
\newcommand{\one}[1]{\mbox{1}\hspace{-0.25em}\mbox{l}_{\left\{#1\right\}}}
\newcommand{\relmiddle}[1]{\mathrel{}\middle#1\mathrel{}}
\def\vE{\mathbb E}
\font\b=cmr10 scaled\magstep4
\def\bigzerou{\smash{\lower1.7ex\hbox{\b 0}}}
\def\bigzerou{\smash{\lower1.7ex\hbox{\b 0}}}
\begin{document}

\title{On Hypothesis Testing via a Tunable Loss \\
%{\footnotesize \textsuperscript{*}Note: Sub-titles are not captured in Xplore and
%should not be used}
%\thanks{This research is supported in part by Grant-in-Aid JP17K06446 for Scientific Research (C).}
}

\author{
\IEEEauthorblockN{Akira Kamatsuka}
\IEEEauthorblockA{Shonan Institute of Technology \\ 
% 1--1--25, Tsujido, Fujisawa, Kanagawa, 
% 550--0001, Japan \\
Email: \text{kamatsuka@info.shonan-it.ac.jp}
 }
}

\maketitle

\begin{abstract}
We consider a problem of simple hypothesis testing using 
a randomized test via a tunable loss function proposed by Liao \textit{et al}. 
In this problem, we derive results 
that correspond to the Neyman--Pearson lemma, the Chernoff--Stein lemma, 
and the Chernoff-information in the classical hypothesis testing problem. 
Specifically, we prove that the optimal error exponent of our problem in the Neyman--Pearson's setting is consistent with the classical result. 
Moreover, we provide lower bounds of the optimal Bayesian error exponent.
\end{abstract}
%\textit{A full version of this paper \cite{XXXX} is accessible at:}
%\url{https://arxiv.org/abs/XXXX}

%\begin{IEEEkeywords}
%hoge
%\end{IEEEkeywords}

\section{Introduction}

Hypothesis testing is a form of statistical inference in which a 
judgment is made about a parameter $\theta\in \Theta$ or probability distribution $p_{X\mid \theta}$ 
using a random sample $X^{n}=(X_{1},\dots, X_{n})$ from the distribution, where $\Theta$ is a parameter space.
Hypotheses to be tested are expressed in the form of $H_{0}\colon \theta\in \Theta_{0}$ (null hypothesis) v.s. $H_{1}\colon \theta\in \Theta_{1}$ (alternative hypothesis) 
such that $\Theta = \Theta_{0}\cup \Theta_{1}$ and $\Theta_{0}\cap \Theta_{1} = \varnothing$. 

Commonly used criteria for a specific test are type I error and type II error introduced by Neyman and Pearson \cite{neyman_pearson_1933}. 
In \cite{1933}, they also introduced the concept of what is now called the most powerful test (MP test) and showed that the likelihood ratio test could be the MP test, 
especially for a simple hypothesis testing problem, i.e., $\Theta_{0} = \{\theta_{0}\}$ and $\Theta_{1} = \{\theta_{1}\}$. 
In the simple hypothesis problem, Stein analyzed the asymptotic error exponent in his unpublished paper, 
the results of which were later organized by Chernoff in \cite{10.1214/aoms/1177728347}. 
While in the Bayesian setting, the asymptotic Bayesian error exponent is characterized by the Chernoff-information introduced in \cite{10.1214/aoms/1177729330} (see \cite{Cover:2006:EIT:1146355}). 

It is well known that statistical inferences, including hypothesis testing and estimation, can be formulated via the   
statistical decision-theoretic framework developed by Wald \cite{10.1214/aoms/1177730030}, 
using the concept of decision function and loss function. In particular, classical hypothesis testing problems can be formulated using deterministic test functions and $0$-$1$ loss.

Recently, Liao \textit{et al.} have introduced a tunable loss function called $\alpha$-loss\footnote{Note that we later use $\nu$ instead of $\alpha$ as the notation for the tunable parameter to avoid confusion with the type I error notation.}, 
$\alpha\in [1, \infty]$ for estimation problems in \cite{8804205}, which can represent 
log-loss $(\alpha=1)$ and the soft $0$-$1$ loss $(\alpha = \infty)$, to model an adversary in the  privacy-preserving data publishing problem. 
This tunable loss function can be interpreted as a loss function for a randomized decision function for the estimation problem and has been applied to a variety of problems in recent years, such as binary classification \cite{8849796,9174356,9761766}, generative adversarial network (GAN) \cite{9611499},\cite{https://doi.org/10.48550/arxiv.2205.06393} and guessing \cite{9517733}.

%From the perspective of statistical decision theory, the tunable loss function can be interpreted as a loss function for a randomized decision function. 
In this work, we consider a simple hypothesis testing problem using randomized test functions and apply the tunable loss function for this problem.

Our main contributions are as follows:
\begin{itemize}
\item In the Neyman--Pearson setting, we derive the optimal randomized test 
corresponding to the Neyman--Pearson lemma (Theorem \ref{thm:nu_MP_test}). 
We also characterize the optimal error exponent (Theorem \ref{thm:nu_opt_exponent}), 
which is consistent with the classical result called Chernoff--Stein lemma.
\item In the Bayesian setting, we derive the optimal randomized test (Proposition \ref{prop:opt_bayes_ht}) and 
provide lower bounds of the optimal Bayesian error exponent that depend on the tunable parameter (Theorem \ref{thm:nu_bayes_opt_exponent}). 
Note that these lower bounds correspond to the Chernoff-information in the classical hypothesis testing.
\end{itemize} 

\section{Preliminary}\label{sec:preliminary}
We first review the simple hypothesis testing problem and the tunable loss function 
via the statistical decision theory \cite{GVK027440176}. 
We will assume that all alphabets are finite.

\subsection{Simple hypothesis testing}
Simple hypothesis testing (or binary hypothesis testing) is a statistical inference 
to make a conclusion about hypotheses on a parameter $\theta\in \Theta=\{\theta_{0}, \theta_{1}\}$ of a 
population probability distribution $p_{X\mid\theta}$ of the form $H_{0}\colon \theta = \theta_{0}$ v.s. $H_{1}\colon \theta = \theta_{1}$ 
using a random sample $X^{n} = (X_{1}, \dots, X_{n})\in \mathcal{X}^{n}, X_{i}\in \mathcal{X}, i=1,\dots, n$.
Let $A = \delta_{n}(X^{n})\in  \{0, 1\}$ be the decision made from a random sample $X^{n}$, 
where $\delta_{n}\colon \mathcal{X}^{n}\to \{0, 1\}$ is a deterministic test function and 
$A=1$ means rejecting the null hypothesis $H_{0}$ while $A=0$ means accepting $H_{0}$. 
Let $\Delta_{n}$ be a set of all deterministic test functions.

In the statistical decision-theoretic formulation of the simple hypothesis testing, 
the following loss function $\ell(\theta, a)$ and risk function 
$R(\theta, \delta_{n}):= \vE_{X^{n}\mid \theta}\left[\ell(\theta, \delta_{n}(X^{n}))\right] = \sum_{x^{n}}p_{X^{n}\mid \theta}(x^{n}\mid \theta)\ell(\theta, \delta_{n}(x^{n}))$ are used.

\begin{definition}[Loss and risk function]
\begin{align}
\ell(\theta, a) &:= 
\begin{cases}
1-\one{0}(a), & \theta = \theta_{0}, \\ 
1-\one{1}(a), & \theta = \theta_{1}, 
\end{cases} \\ 
R(\theta, \delta_{n}) 
&:= \vE_{X^{n}\mid \theta}\left[\ell(\theta, \delta_{n}(X^{n}))\right] \\
&=
\begin{cases}
\alpha(\theta_{0}, \delta_{n}), & \theta = \theta_{0}, \\ 
%  1-\beta(\theta, \delta_{n}), & \theta = \theta_{1}
\bar{\beta}(\theta_{1}, \delta_{n}), & \theta=\theta_{1}, 
\end{cases}, 
\end{align} 
where $\one{i}(a), i=0,1$ is an indicator function of a singleton $\{i\}$ and $\alpha(\theta_{0}, \delta_{n}), \bar{\beta}(\theta_{1}, \delta_{n})$ are type I error and type II error, respectively, defined as follows:
\begin{align}
\alpha(\theta_{0}, \delta_{n}) &:= \vE_{X^{n}\mid \theta_{0}}\left[\delta_{n}(X^{n})\right] \\
&= \sum_{x^{n}} p_{X^{n}\mid \theta_{0}}(x^{n}\mid \theta_{0}) \delta_{n}(x^{n}),  \quad \text{(Type I error)}  \\
%  \beta(\theta_{1}, \delta_{n}) &:= \vE_{X^{n}\mid \theta_{1}}\left[\delta_{n}(X^{n})\right] 
%  = \sum_{x^{n}} p_{X^{n}\mid \theta_{1}}(x^{n}\mid \theta_{1}) \delta_{n}(x^{n}). \qquad \text{(Power of test)}
\bar{\beta}(\theta_{1}, \delta_{n}) &:= 1-\vE_{X^{n}\mid \theta_{1}}\left[\delta_{n}(X^{n})\right] \\ 
&= 1- \sum_{x^{n}} p_{X^{n}\mid \theta_{1}}(x^{n}\mid \theta_{1}) \delta_{n}(x^{n}). 
\quad  \text{(Type II error)}
\end{align}
\end{definition}

In the Neyman--Pearson setting, the following MP test is considered, and they 
showed that the likelihood ratio test could be the MP test which is known as Neyman--Pearson lemma.

\begin{definition}[MP test of size $\epsilon$]
Let $\epsilon \in (0, 1)$. The test function $\delta_{n}^{\text{MP}}\colon \mathcal{X}^{n}\to \{0, 1\}$ is the MP test of size $\epsilon$ if 
the following hold:

\begin{enumerate}
\item $\alpha(\theta_{0}, \delta_{n}^{\text{MP}}) \leq \epsilon$. 
\item For any test function $\delta_{n}\in \Delta_{n}$, \\ $\alpha(\theta_{0}, \delta_{n})\leq \epsilon \Longrightarrow \bar{\beta}(\theta_{1}, \delta_{n}) \geq \bar{\beta}(\theta_{1}, \delta_{n}^{\text{MP}})$.
\end{enumerate}
\end{definition}

\begin{prop}[Neyman--Pearson lemma, \text{\cite[Thm 11.7.1]{Cover:2006:EIT:1146355}}]
The likelihood ratio test $\delta_{n}^{\text{LR}}\colon \mathcal{X}^{n}\to \{0, 1\}$ defined as follows is the MP test of size $\epsilon$:
\begin{align}
\delta_{n}^{\text{LR}}(x^{n}) &:= 
\begin{cases}
1, & \frac{p_{X^{n}\mid \theta}(x^{n}\mid \theta_{0})}{p_{X^{n}\mid \theta}(x^{n}\mid \theta_{1})} \leq \lambda, \\ 
0, & \text{otherwise}, 
\end{cases} \label{eq:LR_test}
\end{align}
where threshold $\lambda$ is defined such that $\alpha(\theta_{0}, \delta^{\text{LR}}_{n})= \epsilon$. 
\end{prop}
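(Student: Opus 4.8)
The plan is to verify the two defining conditions of an MP test of size $\epsilon$ directly. Condition 1 is immediate: by the way the threshold $\lambda$ is chosen we have $\alpha(\theta_0,\delta_n^{\text{LR}}) = \epsilon \le \epsilon$. The substance of the proof is therefore condition 2, namely that $\delta_n^{\text{LR}}$ is at least as powerful as any competing test, and for this I would use the classical ``pointwise product'' trick of Neyman and Pearson.

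First I would fix an arbitrary $\delta_n \in \Delta_n$ with $\alpha(\theta_0,\delta_n) \le \epsilon$ and introduce the function
$$h(x^n) := \bigl[\delta_n^{\text{LR}}(x^n) - \delta_n(x^n)\bigr]\bigl[\lambda\, p_{X^n\mid\theta}(x^n\mid\theta_1) - p_{X^n\mid\theta}(x^n\mid\theta_0)\bigr].$$
The key step is to show $h(x^n) \ge 0$ for every $x^n$ by a two-case argument keyed to the value of $\delta_n^{\text{LR}}$. When $\delta_n^{\text{LR}}(x^n) = 1$, the test's definition gives $\lambda\,p_{X^n\mid\theta}(x^n\mid\theta_1) - p_{X^n\mid\theta}(x^n\mid\theta_0) \ge 0$, while $\delta_n^{\text{LR}}(x^n) - \delta_n(x^n) = 1 - \delta_n(x^n) \ge 0$; when $\delta_n^{\text{LR}}(x^n) = 0$, both bracketed factors are nonpositive. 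In either case the product is nonnegative.

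Summing $h(x^n)\ge 0$ over $x^n$ and regrouping the two resulting sums is the second step. Using $\sum_{x^n} \delta_n^{\text{LR}}(x^n)\,p_{X^n\mid\theta}(x^n\mid\theta_1) = 1 - \bar\beta(\theta_1,\delta_n^{\text{LR}})$ and the analogous identity for $\delta_n$, together with $\sum_{x^n}\delta_n^{\text{LR}}(x^n)\,p_{X^n\mid\theta}(x^n\mid\theta_0) = \alpha(\theta_0,\delta_n^{\text{LR}}) = \epsilon$, the accumulated inequality becomes
$$\lambda\bigl[\bar\beta(\theta_1,\delta_n) - \bar\beta(\theta_1,\delta_n^{\text{LR}})\bigr] \ge \epsilon - \alpha(\theta_0,\delta_n) \ge 0,$$
where the last bound uses $\alpha(\theta_0,\delta_n)\le\epsilon$. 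Since $\lambda > 0$, dividing through yields $\bar\beta(\theta_1,\delta_n)\ge\bar\beta(\theta_1,\delta_n^{\text{LR}})$, which is exactly condition 2.

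The argument is routine once set up; the only delicate point I would flag is the existence of a threshold $\lambda$ making $\alpha(\theta_0,\delta_n^{\text{LR}}) = \epsilon$ hold with equality. On a finite alphabet $\alpha(\theta_0,\cdot)$ is a step function of $\lambda$, so an exact match need not exist for every $\epsilon$; this is precisely the gap that randomized tests are introduced to close, and it foreshadows why the later $\nu$-loss results are developed with randomized rather than deterministic tests. For the present (cited) statement I would either assume $\epsilon$ is attainable or allow the usual boundary randomization, and otherwise appeal to the standard reference.
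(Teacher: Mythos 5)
Your proof is correct and is exactly the classical Neyman--Pearson product argument of \cite[Thm 11.7.1]{Cover:2006:EIT:1146355}, which is the reference the paper cites in lieu of giving its own proof, so there is nothing to compare against beyond noting that you have reproduced the standard argument faithfully. Your closing caveat about the existence of a $\lambda$ achieving $\alpha(\theta_{0},\delta_{n}^{\text{LR}})=\epsilon$ exactly on a finite alphabet is well taken (the paper silently assumes attainability here), and note also that $\epsilon>0$ forces $\lambda>0$, so the final division is safe.
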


Stein and Chernoff characterized the optimal error exponent as follows.

\begin{definition}[$\epsilon$-optimal error exponent]
Let $\epsilon\in (0, 1)$. The $\epsilon$-optimal error exponent $B_{\epsilon}$ is defined as 
\begin{align}
B_{\epsilon} &:= -\lim_{n\to \infty} \frac{1}{n}\log \min_{\delta_{n}\colon \alpha(\theta_{0}, \delta_{n}) < \epsilon} \bar{\beta}(\theta_{1}, \delta_{n}),
\end{align}
where minimum is over all deterministic test functions $\delta_{n}$ satisfying $\alpha(\theta_{0}, \delta_{n}) < \epsilon$. 
\end{definition}

\begin{prop}[Chernoff--Stein lemma, \text{\cite[Thm 11.8.3]{Cover:2006:EIT:1146355}}] 
For any $\epsilon\in (0, 1)$, 
\begin{align}
B_{\epsilon} &= D(p_{X\mid \theta_{0}} || p_{X\mid \theta_{1}}), 
\end{align}
where $D(p_{X\mid \theta_{0}} || p_{X\mid \theta_{1}}) := \sum_{x} p_{X\mid \theta}(x\mid \theta_{0}) \log p_{X\mid \theta}(x\mid \theta_{0})/p_{X\mid \theta}(x\mid \theta_{1})$ 
is the Kullback--Leibler divergence. 
\end{prop}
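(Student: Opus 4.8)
The plan is to establish the two matching inequalities $B_{\epsilon} \ge D$ and $B_{\epsilon} \le D$ separately, where I abbreviate $D := D(p_{X\mid\theta_{0}} \| p_{X\mid\theta_{1}})$ and, for the i.i.d.\ sample, write $p_{0}(x^{n}) := p_{X^{n}\mid\theta}(x^{n}\mid\theta_{0})$ and $p_{1}(x^{n}) := p_{X^{n}\mid\theta}(x^{n}\mid\theta_{1})$. Both bounds rest on the weak law of large numbers applied to the normalized log-likelihood ratio: under $p_{0}$,
\[
\frac{1}{n}\log\frac{p_{0}(X^{n})}{p_{1}(X^{n})} = \frac{1}{n}\sum_{i=1}^{n}\log\frac{p_{X\mid\theta}(X_{i}\mid\theta_{0})}{p_{X\mid\theta}(X_{i}\mid\theta_{1})} \longrightarrow D \quad\text{in probability.}
\]
Accordingly, for a fixed $\delta > 0$ I would introduce the relative-entropy typical set
\[
A_{n} := \left\{ x^{n} : 2^{-n(D+\delta)} \le \frac{p_{1}(x^{n})}{p_{0}(x^{n})} \le 2^{-n(D-\delta)} \right\},
\]
so that the above convergence gives $p_{0}(A_{n}) \to 1$ as $n\to\infty$.

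For achievability ($B_{\epsilon} \ge D$), I would take the deterministic test whose acceptance region for $H_{0}$ is $A_{n}$. Its type I error $\alpha(\theta_{0}, \delta_{n}) = p_{0}(A_{n}^{c}) \to 0$, hence lies below $\epsilon$ for all large $n$, so the test is feasible; and the left inequality defining $A_{n}$ gives $p_{1}(x^{n}) \le 2^{-n(D-\delta)} p_{0}(x^{n})$ on $A_{n}$, whence $\bar{\beta}(\theta_{1}, \delta_{n}) = p_{1}(A_{n}) \le 2^{-n(D-\delta)}$. Since the minimizing type II error in the definition of $B_{\epsilon}$ is no larger than this, $-\frac{1}{n}\log(\cdot) \ge D - \delta$, and letting $\delta \to 0$ yields $B_{\epsilon} \ge D$.

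For the converse ($B_{\epsilon} \le D$), I would take an arbitrary feasible test with acceptance region $B_{n}$, so $p_{0}(B_{n}) > 1 - \epsilon$, and lower-bound its type II error. The right inequality defining $A_{n}$ gives $p_{1}(x^{n}) \ge 2^{-n(D+\delta)} p_{0}(x^{n})$ on $A_{n}$, so
\[
\bar{\beta}(\theta_{1}, \delta_{n}) = p_{1}(B_{n}) \ge p_{1}(B_{n}\cap A_{n}) \ge 2^{-n(D+\delta)}\, p_{0}(B_{n}\cap A_{n}),
\]
and a union bound gives $p_{0}(B_{n}\cap A_{n}) \ge p_{0}(B_{n}) + p_{0}(A_{n}) - 1 > 1 - \epsilon - o(1)$. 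Thus $-\frac{1}{n}\log\bar{\beta}(\theta_{1}, \delta_{n}) \le D + \delta + o(1)$ uniformly over feasible tests, and letting $n\to\infty$ and then $\delta\to 0$ gives $B_{\epsilon} \le D$.

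The step I expect to be most delicate is this last union bound: the exponent is preserved only because $p_{0}(B_{n}\cap A_{n})$ stays bounded away from $0$, which is precisely where the hypothesis $\epsilon < 1$ enters (ensuring $1-\epsilon-o(1) > 0$). The remaining care is in the order of limits — checking that $p_{0}(A_{n}^{c})\to 0$ is absorbed into the $o(1)$ terms before $\delta$ is sent to $0$ — while everything else is routine bookkeeping with the law of large numbers.
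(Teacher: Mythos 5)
Your proof is correct and is the standard relative-entropy-typical-set argument of Cover and Thomas; the paper does not prove this proposition itself (it cites \cite[Thm 11.8.3]{Cover:2006:EIT:1146355}), but your two-sided argument --- achievability by accepting on the typical set, converse by intersecting an arbitrary feasible acceptance region with the typical set and using the union bound with $1-\epsilon>0$ --- is exactly the architecture the paper itself deploys in Appendix B when proving the generalized $(\nu,\epsilon)$ version (Theorem \ref{thm:nu_opt_exponent}). No gaps; your identification of the union-bound step and the order of limits as the delicate points is accurate.
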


%Bayesian setting
In the Bayesian setting, let $\pi$ denote a prior probability on $\theta$ such that 
$\pi(\theta_{0}) = \pi_{0}, \pi(\theta_{1}) = \pi_{1}$, and $\pi_{0} + \pi_{1} =1$. 
Then the minimal Bayesian error probability and 
the optimal Bayesian error exponent are defined and characterized as follows.

\begin{definition}[Bayesian error probability]
The Bayes risk for the simple hypothesis testing or the \textit{Bayesian error probability} is defined as
\begin{align}
r(\delta_{n}) &:= \vE_{\theta}[R(\theta, \delta^{n})] \\ 
&= \pi_{0} \alpha(\theta_{0}, \delta_{n}) + \pi_{1} \bar{\beta}(\theta_{1}, \delta_{n}).
\end{align}
\end{definition}

\begin{prop}
The minimal Bayesian error probability is given by 
\begin{align}
\min_{\delta_{n}} r(\delta_{n}) &= r(\delta_{n}^{\text{Bayes}}),
\end{align}
where the optimal test function $\delta_{n}^{\text{Bayes}}\colon \mathcal{X}^{n}\to \{0, 1\}$ is given by 
\begin{align}
\delta_{n}^{\text{Bayes}}(x^{n}) &:= 
\begin{cases}
1, &  \frac{p_{X^{n}\mid \theta}(x^{n}\mid \theta_{0})}{p_{X^{n}\mid \theta}(x^{n}\mid \theta_{1})} \leq \frac{\pi_{1}}{\pi_{0}}, \\ 
0, & \text{otherwise}.
\end{cases}
\end{align}
\end{prop}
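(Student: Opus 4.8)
The plan is to minimize the Bayes risk $r(\delta_{n})$ by a pointwise argument over the finitely many values of $x^{n}$. First I would substitute the definitions of the type I and type II errors into $r(\delta_{n}) = \pi_{0}\alpha(\theta_{0},\delta_{n}) + \pi_{1}\bar{\beta}(\theta_{1},\delta_{n})$ to obtain
\[
r(\delta_{n}) = \pi_{0} \sum_{x^{n}} p_{X^{n}\mid \theta}(x^{n}\mid \theta_{0})\,\delta_{n}(x^{n}) + \pi_{1}\left(1 - \sum_{x^{n}} p_{X^{n}\mid \theta}(x^{n}\mid \theta_{1})\,\delta_{n}(x^{n})\right).
\]
Collecting the terms that multiply $\delta_{n}(x^{n})$ and isolating the $\delta_{n}$-independent constant, this rewrites as
\[
r(\delta_{n}) = \pi_{1} + \sum_{x^{n}} \delta_{n}(x^{n})\bigl[\pi_{0}\, p_{X^{n}\mid \theta}(x^{n}\mid \theta_{0}) - \pi_{1}\, p_{X^{n}\mid \theta}(x^{n}\mid \theta_{1})\bigr].
\]
The key observation is that a deterministic test function $\delta_{n}\in\Delta_{n}$ is free to assign a value in $\{0,1\}$ independently at each point $x^{n}$, so the sum decouples and is minimized term by term.

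Since $\pi_{1}$ does not depend on $\delta_{n}$, each summand is minimized by setting $\delta_{n}(x^{n}) = 1$ exactly when its bracketed coefficient is nonpositive and $\delta_{n}(x^{n}) = 0$ otherwise. The threshold condition $\pi_{0}\, p_{X^{n}\mid \theta}(x^{n}\mid \theta_{0}) - \pi_{1}\, p_{X^{n}\mid \theta}(x^{n}\mid \theta_{1}) \leq 0$ rearranges, using $\pi_{0} > 0$, to the likelihood-ratio form $p_{X^{n}\mid \theta}(x^{n}\mid \theta_{0})/p_{X^{n}\mid \theta}(x^{n}\mid \theta_{1}) \leq \pi_{1}/\pi_{0}$, which is precisely the definition of $\delta_{n}^{\text{Bayes}}$. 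Hence $\delta_{n}^{\text{Bayes}}$ attains the pointwise minimum of every summand simultaneously and therefore minimizes $r(\delta_{n})$ over all of $\Delta_{n}$, establishing the claim.

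I do not expect a genuine obstacle here, as the optimization decouples exactly across the finitely many values of $x^{n}$; no limiting or measure-theoretic argument is required. The only point deserving care is the treatment of ties, namely points $x^{n}$ at which $\pi_{0}\, p_{X^{n}\mid \theta}(x^{n}\mid \theta_{0}) = \pi_{1}\, p_{X^{n}\mid \theta}(x^{n}\mid \theta_{1})$: at such points the corresponding summand vanishes regardless of the value of $\delta_{n}(x^{n})$, so the minimizer is not unique, and the proposition's convention of including the tie in the rejection region (the $\leq$ in the threshold) merely selects one optimal function. Working with the coefficient form above rather than the ratio directly also sidesteps the degenerate cases where $p_{X^{n}\mid \theta}(x^{n}\mid \theta_{1}) = 0$, so no separate handling of those points is needed.
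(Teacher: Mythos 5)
Your argument is correct and complete: the pointwise decoupling of $r(\delta_{n}) = \pi_{1} + \sum_{x^{n}} \delta_{n}(x^{n})\bigl[\pi_{0}\, p_{X^{n}\mid \theta}(x^{n}\mid \theta_{0}) - \pi_{1}\, p_{X^{n}\mid \theta}(x^{n}\mid \theta_{1})\bigr]$ and the term-by-term minimization is exactly the standard textbook derivation of the Bayes-optimal test. The paper itself states this proposition without proof as a classical preliminary (in the spirit of the cited Cover--Thomas results), so there is nothing to contrast; your handling of ties and of the degenerate case $p_{X^{n}\mid \theta}(x^{n}\mid \theta_{1}) = 0$ is a sensible extra touch.
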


\begin{definition}[The optimal Bayesian error exponent]
The optimal Bayesian error exponent $D^{*}$ is defined as 
\begin{align}
D^{*} &:= -\lim_{n\to \infty} \frac{1}{n} \log \min_{\delta_{n}\in \Delta_{n}} r(\delta_{n}). \label{eq:Bayesian_error_exponent}
\end{align}
\end{definition}

\begin{prop}[\text{\cite[Thm 11.9.1]{Cover:2006:EIT:1146355}}]
\begin{align}
D^{*} &= C(p_{X\mid \theta_{0}}, p_{X\mid \theta_{1}}),
\end{align}
where 
\begin{align}
&C(p_{X\mid \theta_{0}}, p_{X\mid \theta_{1}}) \notag \\
&:= -\min_{0\leq \lambda \leq 1} \log \sum_{x} p_{X\mid \theta}(x\mid \theta_{0})^{\lambda}p_{X\mid \theta}(x\mid \theta_{1})^{1-\lambda} \label{eq:chernoff_information}
\end{align} 
is the Chernoff-information.
\end{prop}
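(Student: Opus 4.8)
The plan is to prove this as a large-deviations statement via the method of types, establishing matching bounds on the exponent of the minimal Bayes error probability. First I would use the preceding Proposition: since the optimal test $\delta_n^{\text{Bayes}}$ assigns each $x^n$ to whichever hypothesis carries the larger weighted likelihood, a termwise minimization of $r(\delta_n)=\sum_{x^n}\bigl[\pi_0\, p_{X^n\mid\theta}(x^n\mid\theta_0)\,\delta_n(x^n)+\pi_1\, p_{X^n\mid\theta}(x^n\mid\theta_1)(1-\delta_n(x^n))\bigr]$ over $\delta_n(x^n)\in\{0,1\}$ yields the closed form
\begin{align}
\min_{\delta_n} r(\delta_n) = \sum_{x^n} \min\bigl\{\pi_0\, p_0^n(x^n),\ \pi_1\, p_1^n(x^n)\bigr\},
\end{align}
where I abbreviate $p_i := p_{X\mid\theta}(\cdot\mid\theta_i)$ and write $p_i^n$ for the $n$-fold product. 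Because $\pi_0,\pi_1\in(0,1)$ are fixed constants independent of $n$, they do not affect the exponent, so it suffices to determine the decay rate of $\sum_{x^n}\min\{p_0^n(x^n),\,p_1^n(x^n)\}$.

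For the lower bound on the exponent (an upper bound on the error), I would invoke the elementary inequality $\min\{a,b\}\le a^\lambda b^{1-\lambda}$, valid for all $a,b\ge 0$ and $\lambda\in[0,1]$. Applying it termwise and exploiting the i.i.d.\ product structure factorizes the sum:
\begin{align}
\sum_{x^n}\min\{p_0^n(x^n),p_1^n(x^n)\} \le \left(\sum_{x} p_0(x)^\lambda\, p_1(x)^{1-\lambda}\right)^{n}.
\end{align}
Taking $-\tfrac1n\log$, letting $n\to\infty$, and optimizing over $\lambda\in[0,1]$ gives $D^*\ge C(p_0,p_1)$. This direction is routine.

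The matching upper bound on the exponent is the crux. Here I would partition $\mathcal{X}^n$ into type classes $T(P)$ indexed by empirical distributions $P$. Standard type counting gives $-\tfrac1n\log p_i^n\bigl(T(P)\bigr)\to D(P\|p_i)$, and since only polynomially many types occur, the sum is dominated on the exponential scale by its largest term, whence
\begin{align}
-\lim_{n\to\infty}\frac1n\log \sum_{x^n}\min\{p_0^n,p_1^n\} = \min_{P}\ \max\{D(P\|p_0),\ D(P\|p_1)\}.
\end{align}
The hard part is then the minimax identity $\min_{P}\max\{D(P\|p_0),D(P\|p_1)\} = C(p_0,p_1)$, which I would establish by a Lagrangian/convex-duality argument. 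The minimizing empirical law is the tilted distribution $P^*_{\lambda}(x)\propto p_0(x)^\lambda p_1(x)^{1-\lambda}$; writing $\mu(\lambda):=\log\sum_x p_0(x)^\lambda p_1(x)^{1-\lambda}$, the optimal $\lambda^*$ is the minimizer of $\mu$, and a direct computation shows that at $\lambda^*$ the two divergences coincide, $D(P^*_{\lambda^*}\|p_0)=D(P^*_{\lambda^*}\|p_1)=-\mu(\lambda^*)$. Verifying that this balancing point is indeed the saddle value and identifying it with $-\min_{0\le\lambda\le1}\mu(\lambda)=C(p_0,p_1)$ is where the geometric content of the result resides, and it closes the two bounds into equality.
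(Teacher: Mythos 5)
The paper does not prove this proposition: it is quoted as a known preliminary result, with the proof deferred entirely to the cited reference (Cover and Thomas, Thm.\ 11.9.1). Your proposal is a correct reconstruction of exactly that standard argument --- termwise minimization to get $\sum_{x^n}\min\{\pi_0 p_0^n,\pi_1 p_1^n\}$, the bound $\min\{a,b\}\le a^{\lambda}b^{1-\lambda}$ for achievability of the exponent, and the method of types plus the tilted-distribution minimax identity $\min_{P}\max\{D(P\|p_0),D(P\|p_1)\}=C(p_0,p_1)$ for the converse --- so there is nothing in the paper to contrast it with. One small economy worth noting: once you have $D^{*}\ge C$ from the H\"older-type bound and $D^{*}=\min_{P}\max_i D(P\|p_i)$ from the type-counting step, you do not need the full saddle-point verification; it suffices to exhibit the single tilted law $P^{*}_{\lambda^{*}}$ with $D(P^{*}_{\lambda^{*}}\|p_0)=D(P^{*}_{\lambda^{*}}\|p_1)=-\mu(\lambda^{*})$ to close the gap (handling separately the degenerate case where $\mu$ is minimized at the boundary, in which case $C=0$).
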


\subsection{Tunable loss for point estimation}
In this subsection, $\Theta$ is a finite parameter space whose number of elements is greater than or equal to $2$. 
Liao \textit{et al.} introduced the following tunable loss function, which can represent 
log-loss and the soft $0$-$1$ loss as special cases \cite{8804205}. 
From the statistical decision-theoretic perspective, this tunable loss function can be interpreted as 
a loss function for a randomized decision rule $\delta_{n, \text{est}}^{*}\colon \mathcal{X}^{n}\times \Theta \to [0, 1]$ for point estimation of 
parameter $\theta\in \Theta$.

\begin{definition}[\text{$\nu$-loss \cite[Def 3]{8804205}}] \label{def:alpha_loss}
\begin{align}
&L_{\nu}(\theta, \delta^{*}_{n, \text{est}}(x^{n}, \cdot)) \notag \\ 
&:= 
\begin{cases}
-\log \delta^{*}_{n, \text{est}}(x^{n}, \theta),  & \nu=1, \\
\frac{\nu}{\nu-1} \left\{ 1 - \delta^{*}_{n, \text{est}}(x^{n}, {\theta})^{\frac{\nu-1}{\nu}} \right\}, & \nu > 1, \\ 
1 - \delta^{*}_{n, \text{est}}(x^{n}, \theta), &  \nu = \infty.
\end{cases}
\end{align}
\end{definition}
Figure \ref{fig:graph_nu_loss} shows the tunable function for different values of $\nu$. 
\begin{remark}
Originally, Liao \textit{et al.} used $\alpha$ as a notation for the tunable parameter. However, we use $\nu$ instead to avoid confusion with the type I error notation.
\end{remark}

\begin{figure}[htbp]
\centering
\includegraphics[width=3.5in, clip]{./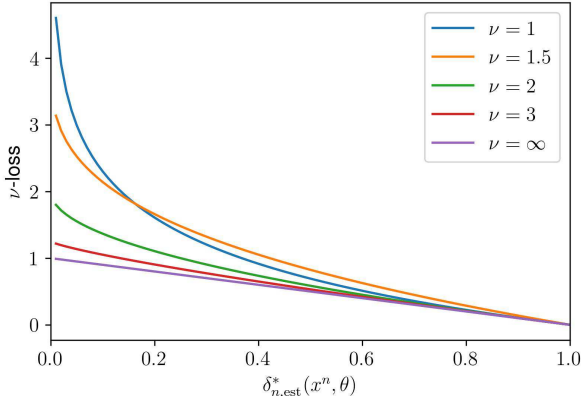}
\caption{The plot of the $\nu$-loss $L_{\nu}(\theta, \delta^{*}_{n, \text{est}}(x^{n}, \cdot))$}
\label{fig:graph_nu_loss}
\end{figure}

\begin{prop}[\text{\cite[Lem 1]{8804205}}] \label{prop:Bayes_estimation}
Let $\pi$ be a prior distribution on $\theta\in \Theta$ and $r(\delta^{*}_{n, \text{est}}) := \vE_{X^{n}, \theta}\left[L_{\nu}(\theta, \delta^{*}_{n, \text{est}}(X^{n}, \cdot))\right]=\sum_{x^{n}, \theta}\pi(\theta)p_{X^{n}\mid \theta}(x^{n}\mid \theta)L_{\nu}(\theta, \delta_{n, \text{est}}^{*}(x^{n}, \cdot))$ be Bayes risk for 
point estimation using a randomized decision function $\delta^{*}_{n, \text{est}}$. 
Then, the minimal Bayes risk is given by 
\begin{align}
\inf_{\delta^{*}_{n, \text{est}}} r(\delta^{*}_{n, \text{est}}) &= r(\delta_{n, \text{est}}^{*, \text{Bayes}}),
\end{align}
where the optimal randomized decision function $\delta_{n, \text{est}}^{*, \text{Bayes}}\colon \mathcal{X}^{n}\times \Theta\to [0, 1]$ is 
given as follows: 

For $\nu\in [1, \infty)$, 
\begin{align}
\delta^{*, \text{Bayes}}_{n, \text{est}}(x^{n}, \hat{\theta}) &= 
\begin{cases}
\pi(\hat{\theta}\mid x^{n}), & \nu=1, \\ 
\frac{\pi(\hat{\theta}\mid x^{n})^{\nu}}{\sum_{\hat{\theta}} \pi(\hat{\theta} \mid x^{n})^{\nu}} & \nu \in (1, \infty), 
\end{cases}
\end{align}
where $\pi(\hat{\theta}|x^{n}) := \pi(\hat{\theta})p_{X^{n}\mid \theta}(x^{n}\mid \hat{\theta})/\sum_{\theta} \pi(\theta)p_{X^{n}\mid \theta}(x^{n}\mid \theta)$ is a posterior distribution on $\theta$ given $X^{n} = x^{n}$. 
For $\nu = \infty$, 
\begin{align}
\delta^{*, \text{Bayes}}_{n}(x^{n}, \hat{\theta}) &= 
\begin{cases}
1/\abs{\text{MAP}(x^{n})}, & \hat{\theta} \in \text{MAP}(x^{n}), \\ 
0, & \text{otherwise},
\end{cases} 
\end{align}
where $\text{MAP}(x^{n}) := \left\{\, \tilde{\theta}\in \Theta \relmiddle{|} \pi(\tilde{\theta}\mid x^{n}) = \max_{\theta}\pi(\theta\mid x^{n}) \right\}$. 
\end{prop}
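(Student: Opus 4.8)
The plan is to exploit the fact that the $\nu$-loss $L_{\nu}(\theta, \delta^{*}_{n,\text{est}}(x^{n},\cdot))$ depends on the randomized rule only through the single value $\delta^{*}_{n,\text{est}}(x^{n},\theta)$ assigned to the \emph{true} parameter, while for each fixed $x^{n}$ the vector $(\delta^{*}_{n,\text{est}}(x^{n},\hat{\theta}))_{\hat{\theta}\in\Theta}$ is constrained only to lie in the probability simplex on $\Theta$. Because these simplex constraints do not couple distinct values of $x^{n}$, the Bayes risk will decouple into independent per-sample optimizations. First I would rewrite the risk by introducing the marginal $m(x^{n}):=\sum_{\theta}\pi(\theta)p_{X^{n}\mid\theta}(x^{n}\mid\theta)$ and the posterior $\pi(\theta\mid x^{n})$, obtaining
\begin{align}
r(\delta^{*}_{n,\text{est}}) = \sum_{x^{n}} m(x^{n})\sum_{\theta} \pi(\theta\mid x^{n})\, L_{\nu}(\theta, \delta^{*}_{n,\text{est}}(x^{n},\theta)).
\end{align}
Since $m(x^{n})\ge 0$ and each inner sum may be minimized independently, it suffices to solve, for every $x^{n}$, the finite-dimensional problem of minimizing $J_{\nu}(q):=\sum_{\theta}\pi(\theta\mid x^{n})L_{\nu}(\theta,q(\theta))$ over all $q=(q(\hat{\theta}))_{\hat{\theta}}$ with $q(\hat{\theta})\ge 0$ and $\sum_{\hat{\theta}}q(\hat{\theta})=1$.

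Next I would solve this per-sample problem in each regime of $\nu$. For $\nu=1$, $J_{1}(q)=-\sum_{\theta}\pi(\theta\mid x^{n})\log q(\theta)$ is a cross entropy, and Gibbs' inequality (equivalently, nonnegativity of the Kullback--Leibler divergence) yields the unique minimizer $q=\pi(\cdot\mid x^{n})$. For $\nu\in(1,\infty)$, the constant $\tfrac{\nu}{\nu-1}$ factors out, so minimizing $J_{\nu}$ is equivalent to \emph{maximizing} $\sum_{\theta}\pi(\theta\mid x^{n})\,q(\theta)^{(\nu-1)/\nu}$. Applying Hölder's inequality with the conjugate exponents $\nu$ and $\nu/(\nu-1)$ gives
\begin{align}
\sum_{\theta}\pi(\theta\mid x^{n})\,q(\theta)^{\frac{\nu-1}{\nu}}\le\left(\sum_{\theta}\pi(\theta\mid x^{n})^{\nu}\right)^{\frac{1}{\nu}}\left(\sum_{\theta}q(\theta)\right)^{\frac{\nu-1}{\nu}},
\end{align}
with equality precisely when $q(\theta)\propto\pi(\theta\mid x^{n})^{\nu}$; normalizing produces the stated minimizer. (Alternatively, a Lagrange-multiplier stationarity computation on the simplex yields the same point, and concavity of $t\mapsto t^{(\nu-1)/\nu}$ certifies it is the global maximizer.) For $\nu=\infty$, $J_{\infty}(q)=1-\sum_{\theta}\pi(\theta\mid x^{n})q(\theta)$ is linear, so its minimum over the simplex is attained by placing all mass on the posterior-maximizing set $\text{MAP}(x^{n})$; the uniform distribution on $\text{MAP}(x^{n})$ is one such minimizer, giving the claimed $\delta^{*,\text{Bayes}}_{n}$.

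I expect the $\nu\in(1,\infty)$ case to be the main obstacle, since one must justify that the simplex-constrained optimization is genuinely solved by the stated interior point rather than by a boundary point. The cleanest route is to phrase the bound through Hölder's inequality together with its equality condition, which sidesteps sign issues in the first-order conditions; the subtler points are confirming that the optimum is interior when every posterior mass is positive, and tracking the degenerate cases where some $\pi(\theta\mid x^{n})=0$ or where $\text{MAP}(x^{n})$ is not a singleton, in which case the $\nu=\infty$ minimizer is non-unique and the uniform choice is merely one canonical selection. Once the per-sample minimizers are identified, reassembling them via the decoupling established in the first step immediately delivers the global optimum $r(\delta^{*,\text{Bayes}}_{n,\text{est}})$.
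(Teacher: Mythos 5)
Your proof is correct. Note that the paper does not prove this proposition itself---it is imported verbatim from \cite[Lem.~1]{8804205}, and the paper only points to that reference's Appendix~A---and your argument (conditioning on $x^{n}$ to decouple the Bayes risk into independent per-sample optimizations over the simplex, then Gibbs' inequality for $\nu=1$, H\"older's inequality with conjugate exponents $\nu$ and $\nu/(\nu-1)$ together with its equality condition for $\nu\in(1,\infty)$, and linearity for $\nu=\infty$) is essentially the standard argument used there, with the H\"older route correctly certifying global optimality over the simplex without any separate boundary analysis.
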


\section{Hypothesis Testing via a Tunable-loss}\label{sec:ht_under_P_loss}
In this section, we will formulate hypothesis testing problems via the tunable-loss in both Neyman--Peason's setting and the Bayesian setting.

Let $\Theta=\{\theta_{0}, \theta_{1}\}$ be a parameter space, 
$\delta_{n}^{*}\colon \mathcal{X}^{n}\times \{0, 1\}\to [0, 1]$ be a randomized test function for hypotheses
$H_{0}\colon \theta = \theta_{0}$ v.s. $H_{1}\colon \theta=\theta_{1}$, i.e., given $X^{n} =x^{n}$, 
$\delta_{n}^{*}(x^{n}, 0)$ represents \textit{the probability of accepting the null hypothesis $H_{0}$} and 
$\delta_{n}^{*}(x^{n}, 1) = 1-\delta_{n}^{*}(x^{n}, 0)$ represents \textit{the probability of accepting the alternative hypothesis $H_{1}$ ($=$ probability of rejecting the null hypothesis $H_{0}$)}, respectively. Let $\Delta_{n}^{*}$ be a set of all randomized test functions. 
In this section, we formulate problems of the simple hypothesis testing 
using the randomized test and a tunable loss function. 
First, we define $\nu$-loss for the hypothesis testing as follows.

\begin{definition}[$\nu$-loss for a test function]
For $\nu \in (1, \infty)$, $\nu$-loss for a randomized test $\delta_{n}^{*}$ is defined as follows:
\begin{align}
L_{{\nu}}(\theta, \delta_{n}^{*}(x^{n}, \cdot)) &= 
\begin{cases}
\frac{\nu}{\nu-1} \left\{1 - \delta^{*}_{n}(x^{n}, 0)^{\frac{\nu-1}{\nu}} \right\}, & \theta = \theta_{0}, \\ 
\frac{\nu}{\nu-1} \left\{1 - \delta^{*}_{n}(x^{n}, 1)^{\frac{\nu-1}{\nu}} \right\}, & \theta = \theta_{1}.
\end{cases}
\end{align}
The value of $L_{\nu}(\theta, \delta_{n}^{*}(x^{n}, \cdot))$ is extended by continuity to 
$\nu=1$ (log-loss) and $\nu=\infty$ (soft $0$-$1$ loss) as Definition \ref{def:alpha_loss}. 
\end{definition}

\begin{remark}
The higher the probability $\delta_{n}^{*}(x^{n}, 0)$ of accepting the null hypothesis $H_{0}\colon \theta=\theta_{0}$ when it is correct, the smaller the value of the loss function. 
Similarly, the higher the probability $\delta_{n}^{*}(x^{n}, 1) = 1-\delta_{n}^{*}(x^{n}, 0)$ of rejecting the null hypothesis when it is false, the smaller the value of the loss function.
\end{remark}

Then we define $\nu$-Type I/II error and $\nu$-Bayesian error probability via 
risk function $R(\theta, \delta_{n}^{*}):= \vE_{X^{n}\mid \theta}\left[L_{\nu}(\theta, \delta_{n}^{*}(X^{n}, \cdot))\right]$ and Bayes risk function $r(\delta_{n}^{*}):=\vE_{\theta}\left[R(\theta, \delta_{n}^{*})\right]$.

\begin{definition}[$\nu$-Type I/II error, $\nu$-Bayesian error]
\begin{align}
R(\theta, \delta^{*}_{n}) 
&=
\begin{cases}
\alpha_{{\nu}}(\theta_{0}, \delta^{*}_{n}), & \theta = \theta_{0}, \\ 
%  1-\beta(\theta, \delta_{n}), & \theta = \theta_{1}
\bar{\beta}_{{\nu}}(\theta_{1}, \delta^{*}_{n}), & \theta=\theta_{1}, 
\end{cases} \\
r(\delta_{n}^{*}) &= \pi_{0}\alpha_{{\nu}}(\theta_{0}, \delta^{*}_{n})  + \pi_{1}\bar{\beta}_{{\nu}}(\theta_{1}, \delta^{*}_{n}), (\text{{$\nu$}-Bayesian error}) 
%&\qquad \qquad \qquad \qquad \qquad(\text{{$\nu$}-Bayesian error}) \\
\end{align} 
where $\pi_{i} = \pi(\theta_{i}), i=0,1$ is the prior probability on $\theta=\theta_{i}$ and 
\begin{align}
\alpha_{{\nu}}(\theta_{0}, \delta^{*}_{n}) 
&:=  \frac{\nu}{\nu-1} \left\{ 1 - \vE_{X^{n}\mid \theta_{0}} \left[\delta^{*}_{n}(X^{n}, 0)^{\frac{\nu-1}{\nu}}\right] \right\}, \notag \\ 
&\qquad \qquad \qquad \qquad \qquad(\text{{$\nu$}-Type I error}) \\
%  \beta(\theta_{1}, \delta_{n}) &:= \vE_{X^{n}\mid \theta_{1}}\left[\delta_{n}(X^{n})\right] 
%  = \sum_{x^{n}} p_{X^{n}\mid \theta_{1}}(x^{n}\mid \theta_{1}) \delta_{n}(x^{n}). \qquad \text{(Power of test)}
\bar{\beta}_{{\nu}}(\theta_{1}, \delta^{*}_{n}) 
&:=  \frac{\nu}{\nu-1} \left\{ 1 - \vE_{X^{n}\mid \theta_{1}} \left[\delta^{*}_{n}(X^{n}, 1)^{\frac{\nu-1}{\nu}}\right] \right\}. \notag \\ 
&\qquad \qquad \qquad \qquad \qquad(\text{{$\nu$}-Type II error})
\end{align}
\end{definition}

Based on the $\nu$-Type I/II error and $\nu$-Bayesian error,  
we extend the concepts in the classical hypothesis testing problem as follows.

\begin{definition}[$\nu$-MP test of size $\epsilon$] 
Let $\epsilon\in (0, 1)$. The randomized test function $\delta_{n}^{*, \text{$\nu$-\text{MP}}}\colon \mathcal{X}^{n}\times \{0, 1\}\to [0, 1]$ is the $\nu$-MP test of size $\epsilon$ if the following hold:
\begin{enumerate}
\item $\alpha_{{\nu}}(\theta_{0}, \delta_{n}^{*, {\nu}\text{-MP}}) \leq \epsilon$, 
\item For any randomized test $\delta^{*}_{n}\in \Delta_{n}^{*}$, \\
$\alpha_{{\nu}}(\theta_{0}, \delta^{*}_{n})\leq \epsilon \Longrightarrow 
\bar{\beta}_{{\nu}}(\theta_{1}, \delta^{*}_{n}) 
\geq \bar{\beta}_{{\nu}}(\theta_{1}, \delta_{n}^{*, {\nu}\text{-MP}})$.
\end{enumerate}
\end{definition}

\begin{definition}[$(\nu, \epsilon)$-error exponent]
Let $\nu\in [1, \infty]$ and $\epsilon\in (0, 1)$.
The $(\nu, \epsilon)$-optimal error exponent $B_{\nu, \epsilon}$ is defined as
\begin{align}
%B_{{\nu}, \epsilon} &:= \sup \left\{\, R>0 \relmiddle{|} \text{$R$ is $({\nu}, \epsilon)$-achievable} \right\}.
B_{\nu, \epsilon} &:= - \lim_{n\to \infty}\frac{1}{n} \log \inf_{\delta_{n}^{*}\colon \alpha_{\nu}(\theta_{0}, \delta_{n}^{*}) < \epsilon} \bar{\beta}(\theta_{1}, \delta_{n}^{*}),
\end{align}
where infimum is over all randomized test functions $\delta_{n}^{*}$ satisfying $\alpha_{\nu}(\theta_{0}, \delta_{n}^{*}) < \epsilon$. 
\end{definition}

\begin{definition}[$\nu$-Bayesian error exponent]
Let $\nu\in [1, \infty]$. 
The $\nu$-Bayesian error exponent $D_{\nu}^{*}$ is defined as 
\begin{align}
D_{\nu}^{*} &:= -\lim_{n\to \infty} \frac{1}{n} \log \inf_{\delta_{n}^{*}\in \Delta_{n}^{*}} r(\delta_{n}^{*}).
\end{align}
\end{definition}

\section{Main Results}\label{sec:main_result}
The main results of this paper are 
derivations of the $\nu$-MP test, characterization of the $(\nu, \epsilon)$-error exponent, 
and derivation of lower bounds of the $\nu$-Bayesian error exponent.

\begin{theorem} \label{thm:nu_MP_test}
Let $\epsilon\in (0, 1)$. The $\nu$-MP test of size $\epsilon$ is given as follows: \\
For $\nu\in [1, \infty)$, 
\begin{align}
\delta^{*, \text{$\nu$-MP}}(x^{n}, 1) 
&= 
\frac{\lambda^{-{\nu}}p_{X^{n}\mid \theta}(x^{n} \mid \theta_{0})^{-{\nu}}}{p_{X^{n}\mid \theta}(x^{n}\mid \theta_{1})^{-{\nu}} + \lambda^{-{\nu}}p_{X^{n}\mid\theta}(x^{n}\mid \theta_{0})^{-{\nu}}}, \label{eq:nu_MP_test_1}\\
\delta^{*, \text{$\nu$-MP}}(x^{n}, 0) &= 1- \delta^{*, \text{$\nu$-MP}}(x^{n}, 1). \label{eq:nu_MP_test_0}
\end{align}
For $\nu  = \infty$, 
\begin{align}
\delta^{*, \text{$\infty$-MP}}(x^{n}, 1) &= 
\begin{cases}
1, & \frac{p_{X^{n}\mid \theta}(x^{n}\mid \theta_{0})}{p_{X^{n}\mid \theta}(x^{n}\mid \theta_{1})}\leq \lambda \\ 
0, & \text{otherwise},
\end{cases} \label{eq:infty_MP_test_1} \\ 
\delta^{*, \text{$\infty$-MP}}(x^{n}, 0) &= 1- \delta^{*, \text{$\infty$-MP}}(x^{n}, 1).  \label{eq:infty_MP_test_2}
\end{align}
Note that $\lambda$ is determined such that $\alpha_{\nu}(\theta_{0}, \delta_{n}^{\text{$\nu$-MP}})= \epsilon$ for $\nu\in [1, \infty]$.
\end{theorem}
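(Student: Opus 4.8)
The plan is to treat the case $\nu\in(1,\infty)$ by reducing the defining optimization of the $\nu$-MP test to a finite-dimensional \emph{concave} maximization and solving it with a single Lagrange multiplier, and then to recover $\nu=1$ and $\nu=\infty$ as a parallel computation and a reduction to the classical Neyman--Pearson lemma, respectively. First I would fix $\nu\in(1,\infty)$ and write $t(x^{n}):=\delta_{n}^{*}(x^{n},1)$, so that $\delta_{n}^{*}(x^{n},0)=1-t(x^{n})$ and the admissible set is the box $[0,1]^{\mathcal{X}^{n}}$. Since $\tfrac{\nu}{\nu-1}>0$, minimizing $\bar\beta_{\nu}(\theta_{1},\delta_{n}^{*})$ subject to $\alpha_{\nu}(\theta_{0},\delta_{n}^{*})\le\epsilon$ is equivalent to maximizing
\[
J(t):=\sum_{x^{n}}p_{X^{n}\mid\theta}(x^{n}\mid\theta_{1})\,t(x^{n})^{\frac{\nu-1}{\nu}}
\]
subject to $\sum_{x^{n}}p_{X^{n}\mid\theta}(x^{n}\mid\theta_{0})\,(1-t(x^{n}))^{\frac{\nu-1}{\nu}}\ge 1-\tfrac{\nu-1}{\nu}\epsilon$. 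Because $\tfrac{\nu-1}{\nu}\in(0,1)$, both $s\mapsto s^{\frac{\nu-1}{\nu}}$ and $s\mapsto(1-s)^{\frac{\nu-1}{\nu}}$ are concave, so $J$ is concave and the feasible region is convex; hence the problem is a concave maximization over a convex set, for which the KKT conditions are \emph{necessary and sufficient} for global optimality.

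Next I would form the Lagrangian with multiplier $\mu\ge 0$ and compute the interior stationarity condition $\partial/\partial t(x^{n})=0$, which after cancelling the common factor $\tfrac{\nu-1}{\nu}$ reads
\[
p_{X^{n}\mid\theta}(x^{n}\mid\theta_{1})\,t(x^{n})^{-\frac{1}{\nu}}=\mu\,p_{X^{n}\mid\theta}(x^{n}\mid\theta_{0})\,(1-t(x^{n}))^{-\frac{1}{\nu}}.
\]
Rearranging gives $\bigl((1-t)/t\bigr)^{1/\nu}=\mu\,p_{X^{n}\mid\theta}(\cdot\mid\theta_{0})/p_{X^{n}\mid\theta}(\cdot\mid\theta_{1})$, and solving for $t(x^{n})$ and setting $\mu=\lambda$ yields exactly \eqref{eq:nu_MP_test_1}; this is the routine algebra I would carry out. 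A key simplification is that the power functions have \emph{infinite slope} at the endpoints of $[0,1]$: the marginal gain in $J$ near $t=0$ and the marginal cost in the constraint near $t=1$ are both unbounded, which forces the optimizer to be interior. Consequently the box constraints are never active and the plain stationarity condition suffices.

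I would then argue the constraint is active at the optimum (increasing any $t(x^{n})$ strictly increases $J$, so the optimum saturates the $\nu$-Type~I budget), which is why $\lambda$ is pinned down by $\alpha_{\nu}(\theta_{0},\delta_{n}^{*,\nu\text{-MP}})=\epsilon$. Existence of such a $\lambda$ follows from the intermediate value theorem, since $\alpha_{\nu}\to 0$ as $\lambda\to\infty$ (then $\delta_{n}^{*}(\cdot,1)\to0$) and $\alpha_{\nu}\to\tfrac{\nu}{\nu-1}>1>\epsilon$ as $\lambda\to0$ (then $\delta_{n}^{*}(\cdot,0)\to0$), together with continuity and monotonicity in $\lambda$.

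The remaining cases are limiting specializations. For $\nu=1$ the same Lagrangian argument applies verbatim with the concave logarithm in place of the power loss; the stationarity condition $p_{1}/t=\mu\,p_{0}/(1-t)$ gives $t=p_{1}/(\lambda p_{0}+p_{1})$, which is precisely \eqref{eq:nu_MP_test_1} evaluated at $\nu=1$. For $\nu=\infty$ the soft $0$--$1$ loss collapses the functionals to $\alpha_{\infty}(\theta_{0},\delta_{n}^{*})=\vE_{X^{n}\mid\theta_{0}}[\delta_{n}^{*}(X^{n},1)]$ and $\bar\beta_{\infty}(\theta_{1},\delta_{n}^{*})=1-\vE_{X^{n}\mid\theta_{1}}[\delta_{n}^{*}(X^{n},1)]$, i.e.\ the classical randomized Type~I and Type~II errors with rejection probability $\delta_{n}^{*}(\cdot,1)$; the $\nu$-MP problem then coincides with the classical most-powerful-test problem, so the Neyman--Pearson lemma cited above yields the likelihood-ratio form \eqref{eq:infty_MP_test_1}--\eqref{eq:infty_MP_test_2}. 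I expect the main obstacle to be the rigorous justification that the Lagrangian stationary point is a \emph{global} optimum rather than merely a critical point; the concavity observation together with the boundary-slope argument for interiority is what makes this step clean.
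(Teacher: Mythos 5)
Your proposal is correct and follows essentially the same route as the paper: the paper likewise poses the $\nu\in(1,\infty)$ case as the constrained minimization of $\bar\beta_{\nu}$ subject to $\alpha_{\nu}\le\epsilon$ and the box/simplex constraints, invokes the KKT conditions to obtain \eqref{eq:nu_MP_test_1}--\eqref{eq:nu_MP_test_0}, handles $\nu=1$ analogously, and reduces $\nu=\infty$ to the classical Neyman--Pearson argument. Your write-up merely supplies details the paper leaves implicit (concavity making KKT sufficient, interiority from the infinite boundary slopes, activeness of the size constraint, and existence of $\lambda$ via the intermediate value theorem), all of which check out.
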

\begin{remark}
Note that the $\infty$-MP test \eqref{eq:infty_MP_test_1}\eqref{eq:infty_MP_test_2} corresponds to the likelihood test defined in \eqref{eq:LR_test}. 
\end{remark}
\begin{proof}
See Appendix \ref{proof:nu_MP_test}. 
\end{proof}

\begin{theorem} \label{thm:nu_opt_exponent}
For any $\epsilon\in (0, 1)$ and any $\nu\in [1, \infty]$, 
\begin{align}
B_{{\nu}, \epsilon} &= D(p_{X\mid \theta_{0}} || p_{X\mid \theta_{1}}), 
\end{align}
where $D(p_{X\mid \theta_{0}} || p_{X\mid \theta_{1}})$ is the Kullback--Leibler divergence. 
\end{theorem}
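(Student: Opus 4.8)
The plan is to sandwich $B_{\nu,\epsilon}$ between $D := D(p_{X\mid\theta_{0}}||p_{X\mid\theta_{1}})$ from both sides, reducing each bound to the classical Chernoff--Stein estimates under the i.i.d.\ product model. Abbreviate $P_{0},P_{1}$ for the $n$-fold laws $p_{X^{n}\mid\theta}(\cdot\mid\theta_{0}),p_{X^{n}\mid\theta}(\cdot\mid\theta_{1})$, write $\vE_{0},\vE_{1}$ for the corresponding expectations, and set $d(x^{n}) := \delta_{n}^{*}(x^{n},0)$, so that the ordinary type~I and type~II errors of a randomized test are $\alpha = 1-\vE_{0}[d(X^{n})]$ and $\bar{\beta}(\theta_{1},\delta_{n}^{*}) = \vE_{1}[d(X^{n})]$. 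The one structural fact I need is that the $\nu$-constraint is, up to a change of level, an ordinary type~I constraint: I claim $\alpha_{\nu}(\theta_{0},\delta_{n}^{*}) < \epsilon$ forces $\alpha < \epsilon'$ for a constant $\epsilon' = \epsilon'(\nu,\epsilon)\in(0,1)$ independent of $n$. For $\nu\in(1,\infty)$ put $s := (\nu-1)/\nu\in(0,1)$; the hypothesis reads $\vE_{0}[d(X^{n})^{s}] > 1-s\epsilon$, and since $u\mapsto u^{1/s}$ is convex, Jensen's inequality gives $\vE_{0}[d(X^{n})] \ge (\vE_{0}[d(X^{n})^{s}])^{1/s} > (1-s\epsilon)^{1/s}$, whence $\alpha < 1-(1-s\epsilon)^{1/s} =: \epsilon'$; as $s\epsilon<1$ this lies in $(0,1)$. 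The case $\nu=1$ is the limit $s\downarrow0$: from $\vE_{0}[-\log d(X^{n})] < \epsilon$ and convexity of $-\log$ one gets $\vE_{0}[d(X^{n})] > 2^{-\epsilon}$, i.e.\ $\epsilon' = 1-2^{-\epsilon}$, while $\nu=\infty$ is immediate since $\alpha_{\infty}=\alpha$.

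For the converse $B_{\nu,\epsilon}\le D$, fix $\delta>0$ and let $B_{n} := \{x^{n} : \frac{1}{n}\log\frac{p_{X^{n}\mid\theta}(x^{n}\mid\theta_{0})}{p_{X^{n}\mid\theta}(x^{n}\mid\theta_{1})} < D+\delta\}$. By the law of large numbers under $\theta_{0}$ the normalized log-likelihood ratio tends to $D$, so $P_{0}(B_{n}^{c})\to0$; hence any feasible test, which by the reduction obeys $\alpha<\epsilon'$, satisfies $\vE_{0}[d(X^{n})\one{X^{n}\in B_{n}}] \ge (1-\epsilon') - P_{0}(B_{n}^{c}) \ge (1-\epsilon')/2$ for all large $n$. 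On $B_{n}$ one has $p_{X^{n}\mid\theta}(x^{n}\mid\theta_{1}) > 2^{-n(D+\delta)}p_{X^{n}\mid\theta}(x^{n}\mid\theta_{0})$, so $\bar{\beta}(\theta_{1},\delta_{n}^{*}) = \vE_{1}[d(X^{n})] \ge 2^{-n(D+\delta)}\vE_{0}[d(X^{n})\one{X^{n}\in B_{n}}] \ge \frac{1-\epsilon'}{2}2^{-n(D+\delta)}$. Taking the infimum over feasible tests and applying $-\frac{1}{n}\log(\cdot)$ gives $B_{\nu,\epsilon}\le D+\delta$, and $\delta\downarrow0$ yields $B_{\nu,\epsilon}\le D$.

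For achievability $B_{\nu,\epsilon}\ge D$, I exhibit a feasible test whose ordinary type~II error decays with exponent $D-\delta$. Let $A_{n} := \{x^{n} : \frac{1}{n}\log\frac{p_{X^{n}\mid\theta}(x^{n}\mid\theta_{0})}{p_{X^{n}\mid\theta}(x^{n}\mid\theta_{1})} > D-\delta\}$, on which $p_{X^{n}\mid\theta}(x^{n}\mid\theta_{1}) < 2^{-n(D-\delta)}p_{X^{n}\mid\theta}(x^{n}\mid\theta_{0})$. For $\nu\in(1,\infty]$ take the deterministic test $d=\one{x^{n}\in A_{n}}$: then $\alpha_{\nu} = \frac{\nu}{\nu-1}\alpha = \frac{\nu}{\nu-1}P_{0}(A_{n}^{c})\to0<\epsilon$ (with $\alpha_{\infty}=\alpha\to0$ for $\nu=\infty$), while $\bar{\beta} = P_{1}(A_{n}) \le 2^{-n(D-\delta)}$, giving exponent at least $D-\delta$. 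For $\nu=1$ a hard rejection makes $\alpha_{1}$ infinite, so I smooth the test by setting $d=1$ on $A_{n}$ and $d=2^{-n(D-\delta)}$ on $A_{n}^{c}$; then $\bar{\beta}\le 2^{-n(D-\delta)}+2^{-n(D-\delta)}$ still has exponent $D-\delta$, whereas $\alpha_{1} = \vE_{0}[-\log d(X^{n})] = n(D-\delta)\,P_{0}(A_{n}^{c})$. Invoking Cram\'er's theorem (the per-letter log-likelihood ratio is bounded on a finite alphabet) gives $P_{0}(A_{n}^{c})\le 2^{-n\gamma}$ for some $\gamma>0$, so $\alpha_{1}=O(n2^{-n\gamma})\to0<\epsilon$. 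In every case the feasible test attains exponent at least $D-\delta$, so $B_{\nu,\epsilon}\ge D-\delta$ for all $\delta>0$ and hence $B_{\nu,\epsilon}\ge D$.

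Combining the two bounds gives $B_{\nu,\epsilon}=D$ for every $\nu\in[1,\infty]$ and $\epsilon\in(0,1)$, and the sandwich simultaneously certifies that the defining limit exists. The main obstacle is the $\nu=1$ achievability: log-loss penalizes a deterministic rejection infinitely, so the classical test cannot be reused directly and must be randomized, and keeping the randomized test feasible while preserving the exponent forces $P_{0}(A_{n}^{c})$ to decay \emph{exponentially} rather than merely to $0$ -- this is exactly where the finite-alphabet large-deviations bound is essential. The Jensen-type reduction of the first paragraph is the conceptual heart: it decouples the tunable constraint from the exponent and makes transparent why the answer does not depend on $\nu$.
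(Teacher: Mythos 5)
Your proof is correct and follows the same basic route as the paper's: an AEP/relative-typical-set test for achievability, and for the converse the observation --- via Jensen's inequality applied to $u\mapsto u^{\nu/(\nu-1)}$ --- that the $\nu$-type-I constraint forces the ordinary acceptance probability $\vE_{X^{n}\mid\theta_{0}}[\delta_{n}^{*}(X^{n},0)]$ to stay bounded away from $0$, after which the classical Chernoff--Stein change of measure on the typical set applies. (Your Jensen reduction is exactly the paper's Lemma~\ref{lem:1-2epsilon_prime}.) That said, your write-up is tighter than the paper's in two places. First, you cleanly separate the constraint level $\epsilon'=\epsilon'(\nu,\epsilon)$, a \emph{fixed} constant in $(0,1)$, from the typical-set width $\delta$, which alone is sent to $0$; the paper uses a single ``arbitrarily small $\epsilon'$'' in both roles, so its converse as literally written only covers tests with $\alpha_{\nu}\le\epsilon'$ rather than the full feasible set $\alpha_{\nu}<\epsilon$. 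Second, you treat $\nu=1$ achievability honestly: the deterministic AEP test has $\alpha_{1}=+\infty$ under log-loss whenever the atypical set has positive probability, so the test must be randomized, and keeping it feasible while preserving the exponent genuinely requires $P_{0}(A_{n}^{c})$ to decay exponentially (large deviations), whereas the paper dismisses $\nu=1$ with ``in a similar way'' even though its deterministic construction does not carry over. One caveat: you read the objective in the definition of $B_{\nu,\epsilon}$ as the ordinary type~II error $\bar{\beta}$ (as the definition is literally written), while the paper's appendix works with $\bar{\beta}_{\nu}$; for $\nu\in(1,\infty]$ the two satisfy $\bar{\beta}\le\bar{\beta}_{\nu}\le\tfrac{\nu}{\nu-1}\bar{\beta}$ so the exponent is unchanged, but under the log-loss reading of the type~II error your $\nu=1$ test would additionally need $\delta_{n}^{*}(x^{n},1)>0$ everywhere (e.g.\ cap $d$ at $1-2^{-n(D-\delta)}$ on $A_{n}$), which costs only a polynomial factor and does not affect the exponent.
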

\begin{remark}
The $(\nu, \epsilon)$-error exponent does not depend on $\nu$ as well as $\epsilon$.
\end{remark}
\begin{proof}
See Appendix \ref{proof:nu_opt_exponent}. 
\end{proof}

\begin{prop} \label{prop:opt_bayes_ht}
The minimal $\nu$-Bayesian error probability is given by 
\begin{align*}
\inf_{\delta_{n}^{*}} r(\delta_{n}^{*}) = r(\delta_{n}^{*, \text{Bayes}}), 
\end{align*}
where the optimal randomized test function $\delta_{n}^{*, \text{Bayes}}\colon \mathcal{X}^{n}\times \{0, 1\} \to [0, 1]$ is given as follows: \\
For $\nu\in [1, \infty)$,  
\begin{align}
\delta_{n}^{*, \text{Bayes}}(x^{n}, 0) 
&:= \frac{\pi(\theta_{0}\mid x^{n})^{\nu}}{\pi(\theta_{0}\mid x^{n})^{\nu} + \pi(\theta_{1} \mid x^{n})^{\nu}}, \\ 
\delta_{n}^{*, \text{Bayes}}(x^{n}, 1) &:= 1- \delta_{n}^{*, \text{Bayes}}(x^{n}, 0).
\end{align}
For $\nu=\infty$, 
\begin{align}
\delta_{n}^{*, \text{Bayes}}(x^{n}, 0) &= 
\begin{cases}
1, & \pi(\theta_{0}\mid x^{n}) \geq \pi(\theta_{1}\mid x^{n}), \\ 
0, & \text{otherwise}, 
\end{cases} \\
\delta_{n}^{*, \text{Bayes}}(x^{n}, 1) &= 1-\delta_{n}^{*, \text{Bayes}}(x^{n}, 0), 
\end{align}
where $\pi(\theta_{i} \mid x^{n}):=\pi_{i}p_{X^{n}\mid \theta}(x^{n}\mid \theta_{i})/\sum_{i=0}^{1}\pi_{i}p(x^{n}\mid \theta_{i})$. 
\end{prop}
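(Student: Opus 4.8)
The plan is to exploit the fact that this is exactly the Bayes estimation problem of Proposition~\ref{prop:Bayes_estimation} specialized to the two-element parameter space $\Theta=\{\theta_0,\theta_1\}$, under the identification of the test action $a\in\{0,1\}$ with the estimate $\hat\theta=\theta_a$; consequently the claimed optimal test is just the estimation-optimal rule restricted to $|\Theta|=2$. Rather than merely invoking that result, I would give a short self-contained derivation, since the binary case admits a clean pointwise optimization. First I would substitute the definitions of $\alpha_\nu$ and $\bar\beta_\nu$ into $r(\delta_n^*)=\pi_0\alpha_\nu(\theta_0,\delta_n^*)+\pi_1\bar\beta_\nu(\theta_1,\delta_n^*)$ and, using $\pi_0+\pi_1=1$, rewrite the risk (for $\nu\in(1,\infty)$) as
\begin{align}
r(\delta_n^*)=\frac{\nu}{\nu-1}\Bigl[1-\sum_{x^n}\bigl(a(x^n)\,\delta_n^*(x^n,0)^{s}+b(x^n)\,\delta_n^*(x^n,1)^{s}\bigr)\Bigr],
\end{align}
where $s:=\tfrac{\nu-1}{\nu}\in(0,1)$, $a(x^n):=\pi_0\,p_{X^n\mid\theta}(x^n\mid\theta_0)$, and $b(x^n):=\pi_1\,p_{X^n\mid\theta}(x^n\mid\theta_1)$. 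Since $\tfrac{\nu}{\nu-1}>0$, minimizing $r$ is equivalent to maximizing the bracketed sum, and because the only constraint is $\delta_n^*(x^n,1)=1-\delta_n^*(x^n,0)$ for each $x^n$ separately, the maximization decouples across $x^n$.

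Next, for each fixed $x^n$ I would maximize $f(p):=a\,p^{s}+b\,(1-p)^{s}$ over $p=\delta_n^*(x^n,0)\in[0,1]$, writing $a=a(x^n)$ and $b=b(x^n)$. As $s\in(0,1)$, both $p\mapsto p^{s}$ and $p\mapsto(1-p)^{s}$ are concave, so $f$ is concave and any stationary point is the global maximizer. Setting $f'(p)=s\,a\,p^{s-1}-s\,b\,(1-p)^{s-1}=0$ gives $(p/(1-p))^{s-1}=b/a$, and using $s-1=-\tfrac1\nu$ this rearranges to $p/(1-p)=(a/b)^{\nu}$, hence
\begin{align}
\delta_n^*(x^n,0)=\frac{a^{\nu}}{a^{\nu}+b^{\nu}}=\frac{\pi(\theta_0\mid x^n)^{\nu}}{\pi(\theta_0\mid x^n)^{\nu}+\pi(\theta_1\mid x^n)^{\nu}},
\end{align}
the last equality following by dividing numerator and denominator by $\bigl(\pi_0\,p_{X^n\mid\theta}(x^n\mid\theta_0)+\pi_1\,p_{X^n\mid\theta}(x^n\mid\theta_1)\bigr)^{\nu}$. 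This is exactly the claimed $\delta_n^{*,\text{Bayes}}(x^n,0)$.

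Finally I would dispatch the boundary values. For $\nu=1$ the summand becomes $a\log p+b\log(1-p)$ (log-loss), whose stationarity condition $a/p=b/(1-p)$ yields $p=a/(a+b)=\pi(\theta_0\mid x^n)$, matching the $\nu=1$ case of the formula; this also agrees with the $\nu\downarrow1$ limit of the interior expression. For $\nu=\infty$ the summand is linear, $a\,p+b\,(1-p)=b+(a-b)p$, so the maximizer is $p=1$ when $a\ge b$ and $p=0$ when $a<b$; recalling $a\ge b\iff\pi(\theta_0\mid x^n)\ge\pi(\theta_1\mid x^n)$, this is precisely the stated MAP rule. The only real care points---hence the nearest thing to an obstacle---are the degenerate atoms where $a(x^n)=0$ or $b(x^n)=0$ (the maximizer is then the corresponding boundary value, consistently with the formula) and, at $\nu=\infty$, the ties $\pi(\theta_0\mid x^n)=\pi(\theta_1\mid x^n)$, where every $p\in[0,1]$ is optimal; the stated rule simply selects the deterministic tie-break $p=1$, so it is one optimal test among possibly many, and the minimal risk value is unaffected.
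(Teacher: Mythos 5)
Your proof is correct and follows essentially the same route as the paper: the paper simply observes that the result is the binary-$\Theta$ specialization of Proposition~\ref{prop:Bayes_estimation} and defers to Liao \emph{et al.}'s proof, which is exactly the identification you make before writing out the underlying pointwise concave maximization of $a\,p^{s}+b\,(1-p)^{s}$ in detail. Your explicit treatment of the stationarity condition, the $\nu=1$ and $\nu=\infty$ limits, and the degenerate/tie cases is a correct expansion of what the paper leaves to the citation.
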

\begin{proof}
It can be proved in a similar way as Proposition \ref{prop:Bayes_estimation} 
(see \cite[Appendix A]{8804205}).
\end{proof}

\begin{theorem} \label{thm:nu_bayes_opt_exponent}
\begin{align}
D_{\nu}^{*} & 
\begin{cases}
\geq D_{\text{B}, \nu}(p_{X\mid \theta_{0}}, p_{X\mid \theta_{1}}), & \nu \in [1, \infty), \\ 
= C(p_{X\mid \theta_{0}}, p_{X\mid \theta_{1}}), & \nu=\infty, 
\end{cases}
\end{align}
where $D_{\text{B}, \nu}(p_{X\mid \theta_{0}}, p_{X\mid \theta_{1}})$ is defined as 
\begin{align}
&D_{\text{B}, \nu}(p_{X\mid \theta_{0}}, p_{X\mid \theta_{1}}) := -\log \max\{\text{BC}_{{\nu}/{2}}, \text{BC}_{1-{\nu}/{2}}\}, \\
&\text{BC}_{{\nu}/{2}} := \sum_{x} {p_{X\mid \theta}(x\mid \theta_{0})^{\frac{\nu}{2}}p_{X\mid \theta}(x\mid\theta_{1})^{1-\frac{\nu}{2}}}, 
\end{align}
and $C(p_{X\mid \theta_{0}}, p_{X\mid \theta_{1}})$ is the Chernoff-information defined in \eqref{eq:chernoff_information}. 
\end{theorem}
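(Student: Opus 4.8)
The plan is to evaluate the infimum in the definition of $D_\nu^*$ exactly, using the optimal test from Proposition \ref{prop:opt_bayes_ht}, and then extract its exponential decay rate. Writing $a := \pi_0 p_{X^n\mid\theta}(x^n\mid\theta_0)$ and $b := \pi_1 p_{X^n\mid\theta}(x^n\mid\theta_1)$, substituting $\delta_n^{*,\text{Bayes}}$ into $r$ and simplifying (the posterior normalizer $a+b$ cancels by homogeneity, and $\delta_n^{*,\text{Bayes}}(x^n,0)^{(\nu-1)/\nu}=a^{\nu-1}/(a^\nu+b^\nu)^{(\nu-1)/\nu}$) should yield, for $\nu\in(1,\infty)$, the closed form
\begin{align}
\inf_{\delta_n^*} r(\delta_n^*) = \frac{\nu}{\nu-1}\sum_{x^n}\left[a+b-\left(a^\nu+b^\nu\right)^{1/\nu}\right]. \notag
\end{align}
Since the prefactor $\nu/(\nu-1)$ does not affect the exponent, the task reduces to bounding the decay of $S_n:=\sum_{x^n}[a+b-(a^\nu+b^\nu)^{1/\nu}]$, and I would treat $\nu\in(1,\infty)$, $\nu=1$, and $\nu=\infty$ separately.

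For the generic range $\nu\in(1,\infty)$, the crux is a single pointwise inequality: there is a finite constant $C=C(\nu)$ with
\begin{align}
a+b-\left(a^\nu+b^\nu\right)^{1/\nu}\le C\left[a^{\nu/2}b^{1-\nu/2}+a^{1-\nu/2}b^{\nu/2}\right] \notag
\end{align}
for all $a,b\ge 0$. By homogeneity this is equivalent to $\phi(r):=r+1-(r^\nu+1)^{1/\nu}\le C[r^{\nu/2}+r^{1-\nu/2}]$ for $r\ge 0$, which I would verify by checking that $\phi(r)/(r^{\nu/2}+r^{1-\nu/2})$ is continuous on $(0,\infty)$ and tends to $0$ as $r\to 0$ and as $r\to\infty$ (using $\phi(r)\sim r$ near $0$ and $\phi(r)\to 1$ at infinity), hence is bounded. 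Summing this bound over $x^n$ and exploiting the i.i.d. product structure gives $\sum_{x^n}a^{s}b^{1-s}=\pi_0^{s}\pi_1^{1-s}\,\mathrm{BC}_s^{\,n}$ with $\mathrm{BC}_s:=\sum_x p_{X\mid\theta}(x\mid\theta_0)^s p_{X\mid\theta}(x\mid\theta_1)^{1-s}$, for $s=\nu/2$ and $s=1-\nu/2$, so that $S_n\le C'\max\{\mathrm{BC}_{\nu/2},\mathrm{BC}_{1-\nu/2}\}^n$ up to a constant factor irrelevant to the exponent. Taking $-\frac1n\log(\cdot)$ and letting $n\to\infty$ yields $D_\nu^*\ge-\log\max\{\mathrm{BC}_{\nu/2},\mathrm{BC}_{1-\nu/2}\}=D_{\text{B},\nu}$.

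The two endpoints require minor variants. At $\nu=1$ the optimal test is the posterior and the minimal risk collapses to the conditional entropy $H(\theta\mid X^n)=\sum_{x^n}P(x^n)H_b(\pi(\theta_0\mid x^n))$, where $P(x^n)$ is the marginal of $X^n$ and $H_b$ the binary entropy; using the elementary bound $H_b(t)\le C\sqrt{t(1-t)}$ together with the same product computation gives $H(\theta\mid X^n)\le C\sqrt{\pi_0\pi_1}\,\mathrm{BC}_{1/2}^{\,n}$, i.e. the $\nu=1$ instance $D_1^*\ge-\log\mathrm{BC}_{1/2}$. At $\nu=\infty$ the soft $0$–$1$ loss with the MAP rule of Proposition \ref{prop:opt_bayes_ht} makes the minimal risk equal to $\sum_{x^n}\min\{\pi_0 p_{X^n\mid\theta}(x^n\mid\theta_0),\,\pi_1 p_{X^n\mid\theta}(x^n\mid\theta_1)\}$, which is exactly the classical minimal Bayesian error probability; its exponent is the Chernoff information by the cited result, giving the equality $D_\infty^*=C(p_{X\mid\theta_0},p_{X\mid\theta_1})$.

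The main obstacle is the pointwise inequality in the second paragraph: it is what pins down the exponents $\nu/2$ and $1-\nu/2$ (rather than an arbitrary Chernoff parameter), and one must confirm that the constant $C(\nu)$ stays finite across the whole range—in particular that the two summands on the right-hand side \emph{jointly} dominate $\phi$ near both $r\to0$ and $r\to\infty$, where a single term would fail once $\nu>2$. Everything else (the exact risk evaluation, the i.i.d.\ factorization into $\mathrm{BC}_s^{\,n}$, and passing to the limit) is routine, and the appearance of the maximum of the two Bhattacharyya-type coefficients simply reflects that the slower-decaying of the two contributions controls the rate.
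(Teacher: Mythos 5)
Your proposal is correct, and at the top level it follows the same strategy as the paper: substitute the optimal Bayes test of Proposition~\ref{prop:opt_bayes_ht} into the risk, upper-bound the resulting sum by the two skewed Bhattacharyya coefficients, exploit the i.i.d.\ product structure to get $\mathrm{BC}_s^{\,n}$, and reduce the $\nu=\infty$ case to the classical Chernoff-information result. Where you genuinely diverge is in the key bounding step for $\nu\in(1,\infty)$. The paper never writes the risk in closed form; it first applies $1-t^{(\nu-1)/\nu}\leq 1-t$ to each bracket, which turns each summand into a harmonic-mean-type expression $ab^{\nu}/(a^{\nu}+b^{\nu})$, and then invokes harmonic mean $\leq$ geometric mean to land directly on $\tfrac{1}{2}\left(a^{\nu/2}b^{1-\nu/2}+a^{1-\nu/2}b^{\nu/2}\right)$ with the explicit constant $\tfrac{1}{2}\cdot\tfrac{\nu}{\nu-1}$. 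You instead collapse the risk exactly to $\tfrac{\nu}{\nu-1}\sum_{x^{n}}\left[a+b-(a^{\nu}+b^{\nu})^{1/\nu}\right]$ (this simplification is correct --- the normalizer $(a+b)^{\nu}$ cancels as you say) and then establish the pointwise domination by a compactness argument on $\phi(r)/(r^{\nu/2}+r^{1-\nu/2})$. Your route is non-constructive in $C(\nu)$, which is harmless since multiplicative constants do not affect the exponent, and your boundary analysis correctly explains why \emph{both} exponents $\nu/2$ and $1-\nu/2$ are needed (one term fails at $r\to 0$ or $r\to\infty$ once $\nu>2$); the paper's chain of two elementary inequalities buys an explicit constant and avoids any limit analysis, at the cost of obscuring that the quantity being bounded is exactly $a+b-\lVert(a,b)\rVert_{\nu}$. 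Your $\nu=1$ treatment via $H(\theta\mid X^{n})$ and the bound $H_{b}(t)\leq C\sqrt{t(1-t)}$ is more explicit than the paper's one-line remark invoking $1-1/x\leq\log x$, and your $\nu=\infty$ case coincides with the paper's.
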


\begin{remark}
$\text{BC}_{{\nu}/{2}}$ is called the $\frac{\nu}{2}$-skewed Bhattacharyya affinity coefficient \cite{https://doi.org/10.48550/arxiv.2207.03745}. 
Note that $D_{\text{B}, 1}$ and $\text{BC}_{1/2}$ equal the Bhattacharyya distance 
and the Bhattacharyya coefficient, respectively.
\end{remark}

\begin{proof}
See Appendix \ref{proof:nu_bayes_opt_exponent}. 
\end{proof}

Basic properties of the lower bound $D_{\text{B}, \nu}(p_{X\mid \theta_{0}}, p_{X\mid \theta_{1}})$ follows immediately from \cite[Exercise 2.28]{Han2002} and \cite[Cor 2]{6832827}. 

\begin{cor} The following hold:
\begin{enumerate}
\item $D_{\text{B}, \nu}(p_{X\mid \theta_{0}}, p_{X\mid \theta_{1}})$ is concave in $\nu\in [1, \infty)$.
\item $D_{\text{B}, \nu}(p_{X\mid \theta_{0}}, p_{X\mid \theta_{1}}) \geq 0$ for $\nu\in [1, 2)$, 
\item $D_{\text{B}, \nu}(p_{X\mid \theta_{0}}, p_{X\mid \theta_{1}}) = 0$ for $\nu=2$, 
\item $D_{\text{B}, \nu}(p_{X\mid \theta_{0}}, p_{X\mid \theta_{1}}) < 0$ for $\nu\in (2, \infty)$.
\end{enumerate}
\end{cor}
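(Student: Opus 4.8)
The plan is to reduce all four assertions to elementary properties of the single-variable function $s\mapsto\log\mathrm{BC}_{s}$, where I write $\mathrm{BC}_{s}:=\sum_{x}p_{0}(x)^{s}p_{1}(x)^{1-s}$ with the shorthand $p_{i}:=p_{X\mid\theta}(\cdot\mid\theta_{i})$, thereby extending the paper's $\mathrm{BC}_{\nu/2}$ to an arbitrary real exponent $s$. The one external input I would invoke is the convexity of $s\mapsto\log\mathrm{BC}_{s}$ on $\mathbb{R}$, which is exactly the content of \cite[Exercise 2.28]{Han2002} and \cite[Cor 2]{6832827} and follows from H\"older's inequality; it is moreover strictly convex unless $p_{0}=p_{1}$, since writing $T(x):=\log(p_{0}(x)/p_{1}(x))$ one has $\mathrm{BC}_{s}=\vE_{p_{1}}[e^{sT}]$, so $\frac{d^{2}}{ds^{2}}\log\mathrm{BC}_{s}$ equals the variance of $T$ under the tilted law proportional to $p_{0}^{s}p_{1}^{1-s}$, which is positive unless $T$ is constant, i.e. unless $p_{0}=p_{1}$. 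I would also record the two boundary values $\mathrm{BC}_{0}=\sum_{x}p_{1}(x)=1$ and $\mathrm{BC}_{1}=\sum_{x}p_{0}(x)=1$, i.e. $\log\mathrm{BC}_{0}=\log\mathrm{BC}_{1}=0$. Throughout I assume $p_{0}$ and $p_{1}$ share the same finite support, so that $\mathrm{BC}_{s}$ is finite for every $s$ and the non-degenerate case $p_{0}\neq p_{1}$ is meaningful.

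For the concavity claim (1), I would rewrite $D_{\mathrm{B},\nu}=-\log\max\{\mathrm{BC}_{\nu/2},\mathrm{BC}_{1-\nu/2}\}=\min\{-\log\mathrm{BC}_{\nu/2},\,-\log\mathrm{BC}_{1-\nu/2}\}$, using that $-\log$ is decreasing. Each of $\nu\mapsto\nu/2$ and $\nu\mapsto 1-\nu/2$ is affine, so $\nu\mapsto\log\mathrm{BC}_{\nu/2}$ and $\nu\mapsto\log\mathrm{BC}_{1-\nu/2}$ are convex by composition with the convex function above; hence both $-\log\mathrm{BC}_{\nu/2}$ and $-\log\mathrm{BC}_{1-\nu/2}$ are concave in $\nu$, and the pointwise minimum of two concave functions is concave. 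This establishes (1) on all of $[1,\infty)$.

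For the sign claims (2)--(4), I would first turn convexity-plus-two-zeros into a sign pattern for $\log\mathrm{BC}_{s}$: a convex function vanishing at $s=0$ and $s=1$ satisfies $\log\mathrm{BC}_{s}\le 0$ for $s\in[0,1]$ and $\log\mathrm{BC}_{s}\ge 0$ for $s\notin[0,1]$ (expressing $1$ as a convex combination of $0$ and $s$), with strict inequalities off the two zeros when $p_{0}\neq p_{1}$ by strict convexity. I would then read off the exponents $s_{1}=\nu/2$ and $s_{2}=1-\nu/2$ appearing in $D_{\mathrm{B},\nu}$. For $\nu\in[1,2)$ both $s_{1}\in[\tfrac12,1)$ and $s_{2}\in(0,\tfrac12]$ lie in $[0,1]$, so $\mathrm{BC}_{s_{1}},\mathrm{BC}_{s_{2}}\le 1$, whence $\max\{\mathrm{BC}_{s_{1}},\mathrm{BC}_{s_{2}}\}\le1$ and $D_{\mathrm{B},\nu}\ge0$, giving (2). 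At $\nu=2$ we have $s_{1}=1$, $s_{2}=0$, so both coefficients equal $1$ and $D_{\mathrm{B},2}=0$, giving (3). For $\nu\in(2,\infty)$ we have $s_{1}=\nu/2>1$ and $s_{2}=1-\nu/2<0$, both outside $[0,1]$, so $\mathrm{BC}_{s_{1}},\mathrm{BC}_{s_{2}}>1$ strictly, hence $\max\{\mathrm{BC}_{s_{1}},\mathrm{BC}_{s_{2}}\}>1$ and $D_{\mathrm{B},\nu}<0$, giving (4).

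The computation is short once the convexity input is in hand, so the only genuine subtlety---and the point I would be most careful about---is the strictness in (4): the strict inequality $D_{\mathrm{B},\nu}<0$ relies on $p_{0}\neq p_{1}$ (if $p_{0}=p_{1}$ then $\mathrm{BC}_{s}\equiv1$ and $D_{\mathrm{B},\nu}\equiv0$), and on the common-support assumption that keeps $\mathrm{BC}_{s}$ finite for the out-of-range exponents $s_{2}<0$ and $s_{1}>1$. I would state these as standing non-degeneracy assumptions, under which the four items follow immediately from the cited convexity results as claimed.
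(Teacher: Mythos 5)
Your proposal is correct and follows exactly the route the paper intends: the paper offers no written proof, deferring to the convexity of $s\mapsto\log\sum_{x}p_{X\mid\theta}(x\mid\theta_{0})^{s}p_{X\mid\theta}(x\mid\theta_{1})^{1-s}$ together with its zeros at $s=0$ and $s=1$ (the content of the cited \cite[Exercise 2.28]{Han2002} and \cite[Cor 2]{6832827}), which is precisely what you reconstruct and then apply at $s=\nu/2$ and $s=1-\nu/2$. Your explicit flagging of the standing assumptions $p_{X\mid\theta_{0}}\neq p_{X\mid\theta_{1}}$ and common support, needed for the strict inequality in item (4) and for finiteness of the coefficients when the exponents leave $[0,1]$, is a point the paper leaves implicit.
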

Note that when $\nu\geq 2$, this lower bound $D_{\text{B}, \nu}(p_{X\mid \theta_{0}}, p_{X\mid \theta_{1}})$ is useless.

\begin{eg}
Let $X^{n}=(X_{1},\dots, X_{n})$  be a random sample of size $n$ from the Bernoulli distribution $\textsf{Bern}(\theta)$. 
Now, consider the following hypothesis test: 
\begin{align*}
H_{0}: \theta = 0.5 \text{ vs. } H_{1}: \theta = 0.7.
\end{align*}
In this situation, Figure \ref{fig:lb_nu} shows a graph of the lower bound $D_{\text{B}, \nu}$ for $\nu\in [1, 2]$. 

\begin{figure}[htbp]
\centering
\includegraphics[width=3.5in, clip]{./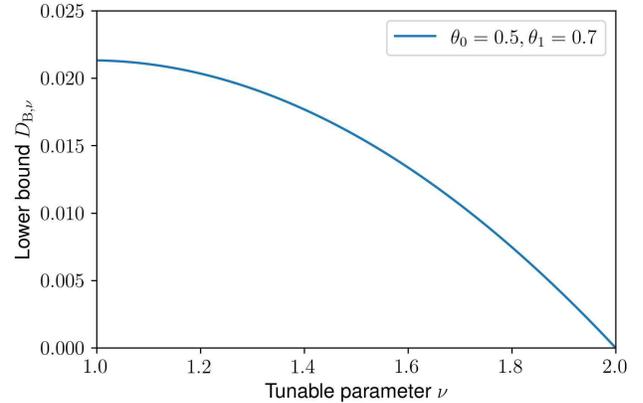}
\caption{The plot of the lower bound $D_{\text{B}, \nu}$ as a function of $\nu$}
\label{fig:lb_nu}
\end{figure}
\end{eg}

\section{Conclusion}\label{sec:conclusion}
In this work, we have developed simple hypothesis testing problems via the tunable loss by Liao \textit{et al.}\cite{8804205} 
in the Neyman--Pearsons' and Bayesian settings. 
Our results correspond to Neyman--Pearson lemma, Chernoff-Stein lemma, and Chernoff-information.
Future work includes deriving upper bound of $\nu$-Bayesian error exponent $D^{*}_{\nu}$ and 
valid lower bound for $\nu \geq 2$. 

\appendices

\section{Proof of Theorem \ref{thm:nu_MP_test}} \label{proof:nu_MP_test}
\begin{proof}
For $\nu=\infty$, it can be proved in the same way as for the Neyman--Pearson lemma by considering the set 
$\textsf{A}(\delta^{*, \text{$\infty$-MP}}_{n}):=\left\{\, x^{n}\in \mathcal{X}^{n}\relmiddle{|} \delta^{*, \text{$\infty$-MP}}_{n}(x^{n}, 0) = 1\right\}$ 
to be an acceptance region. 
For $\nu\in (1, \infty)$, the $\nu$-MP test $\delta^{*, \text{$\nu$-MP}}$ is the solution to the following optimization problem:

\begin{align*}
&\text{minimize } \bar{\beta}_{{\nu}}(\theta_{1}, \delta^{*}_{n}) 
=  \frac{\nu}{\nu-1} \left\{ 1 - \vE_{X^{n}\mid \theta_{1}} \left[\delta^{*}_{n}(X^{n}, 1)^{\frac{\nu-1}{\nu}}\right] \right\} \\
&\text{subject to }  \\
&\alpha_{{\nu}}(\theta_{0}, \delta^{*}_{n}) 
=  \frac{\nu}{\nu-1} \left\{ 1 - \vE_{X^{n}\mid \theta_{0}} \left[\delta^{*}_{n}(X^{n}, 0)^{\frac{\nu-1}{\nu}}\right] \right\} \leq \epsilon \\ 
& 0\leq \delta^{*}_{n}(x^{n}, 0) \leq 1, \\ 
& 0\leq \delta^{*}_{n}(x^{n}, 1) \leq 1, \\ 
& \delta^{*}_{n}(x^{n}, 0) + \delta^{*}_{n}(x^{n}, 1) = 1.
\end{align*}
By using the KKT conditions, we obtain \eqref{eq:nu_MP_test_1}\eqref{eq:nu_MP_test_0}. 
For $\nu=1$, it can be proved in a similar way as $\nu\in (1, \infty)$. 
\end{proof}

\section{Proof of Theorem \ref{thm:nu_opt_exponent}} \label{proof:nu_opt_exponent}
\begin{proof}
We will prove only for $\nu\in (1, \infty)$. 
Proofs for $\nu=1$ and $\nu=\infty$ can be obtained in a similar way.
For simplicity, we will denote $D(p_{X\mid \theta_{0}} || p_{X\mid \theta_{1}})$ by $D(p_{0} || p_{1})$. 

\noindent
(Direct part): 
Let $\epsilon^{\prime}>0$ be an arbitrarily small number such that 
$0 < \epsilon^{\prime} < \epsilon$ and $\epsilon^{\prime}_{{\nu}} := \epsilon^{\prime}({\nu}-1)/{\nu}$.
Define a set of $\epsilon_{\nu}^{\prime}$-\textit{relative typical sequences} 
$A_{\epsilon_{\nu}^{\prime}}^{(n)}(p_{{0}} || p_{{1}})$ as follows:
\begin{align}
&A_{\epsilon_{\nu}^{\prime}}^{(n)} (p_{{0}} || p_{{1}}) \notag \\ 
&:= \left\{\, x^{n} \in \mathcal{X}^{n} \colon
\abs{\frac{1}{n} \log \frac{p_{X\mid \theta}(x^{n}\mid \theta_{0})}{p_{X\mid \theta}(x^{n}\mid \theta_{1})} - D(p_{0} || p_{1})} \leq  \epsilon_{\nu}^{\prime} \right\}. 
\end{align}
Moreover, define a sequence of test functions $(\delta_{n}^{*, \text{AEP}})_{n=1}^{\infty}$ as follows: 
\begin{align}
\delta_{n}^{*, \text{AEP}}(x^{n}, 1) &= 
\begin{cases}
0, & x^{n}\in A_{\epsilon_{\nu}^{\prime}}^{(n)}(p_{{0}} || p_{{1}}), \\ 
1, & \text{otherwise}. 
\end{cases}
\end{align}

Then, from the asymptotic equipartition property (AEP, see, \cite[Thm 11.8.2]{Cover:2006:EIT:1146355}), 
the following holds for sufficiently large $n$: 
\begin{align} 
&\alpha_{{\nu}}(\theta_{0}, \delta_{n}^{*, \text{AEP}})  \notag \\ 
&= \frac{\nu}{\nu-1} \Bigl\{1 - \sum_{x^{n}}p_{X^{n}\mid \theta}(x^{n}\mid \theta_{0}) \delta^{*, \text{AEP}}(x^{n}, 0)^{\frac{\nu-1}{\nu}} \Bigr\} \\ 
&= \frac{\nu}{\nu-1} \cdot \underbrace{\sum_{x^{n}\in A_{\epsilon_{\nu}^{\prime}}^{(n)^{c}}}(p_{0} || p_{1}) p_{X^{n}\mid\theta}(x^{n}\mid \theta_{0})}_{\leq \epsilon_{\nu}^{\prime}} \\ 
&\leq \frac{\nu}{\nu-1}\cdot \epsilon_{\nu}^{\prime} = \epsilon^{\prime}\leq \epsilon.
\end{align}

Similarly, it also holds from the AEP that 
\begin{align}
&-\lim_{n\to \infty}\frac{1}{n} \log \bar{\beta}_{{\nu}}(\theta_{1}, \delta_{n}^{*, \text{AEP}}) \notag \\
&= -\lim_{n\to \infty} \frac{1}{n}\log \frac{\nu}{\nu-1} \cdot \underbrace{\sum_{x^{n}\in A_{\epsilon_{\nu}^{\prime}}^{(n)}(p_{0} || p_{1})} p_{X^{n}\mid\theta}(x^{n}\mid \theta_{1})}_{<2^{-n(D(p_{0} || p_{1}) - \epsilon_{\nu}^{\prime})}}\\ 
&\geq D(p_{0} || p_{1}) - \epsilon_{\nu}^{\prime} > D(p_{0} || p_{1})-\epsilon^{\prime}. 
\end{align}
Therefore, $R = D(p_{0} || p_{1})-\epsilon^{\prime}$ is achievable. 
Since $\epsilon^{\prime}>0$ is arbitrary, we can conclude that 
$B_{\nu, \epsilon}\geq D(p_{0} || p_{1})$. 

\noindent
(Converse part): Let $\epsilon^{\prime}>0$ be an arbitrarily small number such that 
$0 < \epsilon^{\prime} < \epsilon$ and 
$(\delta^{*}_{n})_{n=1}^{\infty}$ be an arbitrary sequence of 
randomized test functions such that 
$\alpha_{{\nu}}(\theta_{0}, \delta^{*}_{n}) \leq \epsilon^{\prime}$ 
for sufficiently large $n$. Let $A_{\epsilon^{\prime}}^{(n)}(p_{0} || p_{1})$ be an $\epsilon^{\prime}$-relative typical sequences. 
First, we will show the next lemma by using the Jensen's inequality.
\begin{lemma} \label{lem:1-2epsilon_prime}
\begin{align}
\sum_{x^{n}\in A_{\epsilon^{\prime}}^{(n)}(p_{0} || p_{1})} p_{X^{n}\mid \theta}(x^{n}\mid \theta_{0})\delta^{*}_{n}(x^{n}, 0) \geq 1-2\epsilon^{\prime}. 
\end{align}
\end{lemma}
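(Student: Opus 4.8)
The plan is to combine two ingredients: the mass that $p_{X^{n}\mid\theta}(\cdot\mid\theta_{0})$ places on the relative typical set $A_{\epsilon'}^{(n)}(p_0\|p_1)$, and the constraint $\alpha_{\nu}(\theta_0,\delta_n^*)\le\epsilon'$. Write $s:=(\nu-1)/\nu\in(0,1)$ and abbreviate $f(x^{n}):=\delta_n^*(x^{n},0)\in[0,1]$. First I would record the two facts. From the relative-entropy AEP \cite[Thm 11.8.2]{Cover:2006:EIT:1146355}, for all sufficiently large $n$ the typical set carries almost all the mass, so that
\begin{align}
\sum_{x^{n}\notin A_{\epsilon'}^{(n)}(p_0\|p_1)} p_{X^{n}\mid\theta}(x^{n}\mid\theta_{0})\le\epsilon'.
\end{align}
From $\alpha_{\nu}(\theta_0,\delta_n^*)=\frac{\nu}{\nu-1}\{1-\vE_{X^{n}\mid\theta_0}[f(X^{n})^{s}]\}\le\epsilon'$, rearranging yields the key moment bound
\begin{align}
\vE_{X^{n}\mid\theta_0}\!\left[f(X^{n})^{s}\right]\ge 1-s\epsilon'.
\end{align}

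The heart of the argument is to convert this bound on the $s$-th power into a bound on $\vE_{X^{n}\mid\theta_0}[f(X^{n})]$ itself. Since $s\in(0,1)$, the map $t\mapsto t^{1/s}$ is convex on $[0,\infty)$, so Jensen's inequality applied to the random variable $f(X^{n})^{s}$ under $p_{X^{n}\mid\theta}(\cdot\mid\theta_0)$ gives
\begin{align}
\vE_{X^{n}\mid\theta_0}\!\left[f(X^{n})\right]
=\vE_{X^{n}\mid\theta_0}\!\left[\left(f(X^{n})^{s}\right)^{1/s}\right]
\ge\left(\vE_{X^{n}\mid\theta_0}\!\left[f(X^{n})^{s}\right]\right)^{1/s}
\ge(1-s\epsilon')^{1/s}.
\end{align}
A short computation via the generalized Bernoulli inequality $(1+x)^{r}\ge 1+rx$ with $r=1/s\ge1$ and $x=-s\epsilon'\ge-1$ then gives $(1-s\epsilon')^{1/s}\ge 1-\epsilon'$, hence $\vE_{X^{n}\mid\theta_0}[f(X^{n})]\ge 1-\epsilon'$.

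Finally I would split the full expectation over $A_{\epsilon'}^{(n)}(p_0\|p_1)$ and its complement. Since $0\le f\le 1$, the contribution from the complement is at most the mass there, which is $\le\epsilon'$ by the first ingredient, and therefore
\begin{align}
\sum_{x^{n}\in A_{\epsilon'}^{(n)}(p_0\|p_1)} p_{X^{n}\mid\theta}(x^{n}\mid\theta_{0})\,f(x^{n})
\ge\vE_{X^{n}\mid\theta_0}[f(X^{n})]-\epsilon'
\ge(1-\epsilon')-\epsilon'=1-2\epsilon',
\end{align}
which is the claim. The one genuinely nontrivial step is the Jensen-plus-Bernoulli passage from the $s$-th moment bound to $\vE_{X^{n}\mid\theta_0}[f(X^{n})]\ge 1-\epsilon'$: the naive inequality $f^{s}\ge f$ for $f\in[0,1]$ runs the wrong way, so the convexity of $t\mapsto t^{1/s}$ is exactly what is needed to reverse it. Everything else is bookkeeping.
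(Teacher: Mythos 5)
Your proof is correct and follows essentially the same route as the paper: rearrange the constraint $\alpha_{\nu}(\theta_{0},\delta_{n}^{*})\leq\epsilon'$ into a bound on $\vE_{X^{n}\mid\theta_{0}}[\delta_{n}^{*}(X^{n},0)^{(\nu-1)/\nu}]$, use Jensen to pass to $\vE_{X^{n}\mid\theta_{0}}[\delta_{n}^{*}(X^{n},0)]\geq(1-s\epsilon')^{1/s}\geq 1-\epsilon'$, and then subtract the at-most-$\epsilon'$ mass outside the relative typical set. Your phrasing of Jensen via convexity of $t\mapsto t^{1/s}$ and the Bernoulli inequality $(1+x)^{r}\geq 1+rx$ is just a reformulation of the paper's concavity-of-$Z^{s}$ argument and its inequality $(1-\gamma x)^{1/\gamma}\geq 1-x$.
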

\begin{proof}
Since 
\begin{align}
\alpha_{{\nu}}(\theta_{0}, \delta^{*}_{n}) 
&:= \frac{\nu}{\nu-1} \left\{ 1 - \vE_{X^{n}\mid \theta_{0}}\left[\delta^{*}_{n}(X^{n}, 0)^{\frac{\nu-1}{\nu}}\right]\right\} \leq \epsilon^{\prime}
\end{align}
for sufficiently large $n$, it holds that 
\begin{align}
\vE_{X^{n}\mid \theta_{0}}\left[\delta^{*}_{n}(X^{n}, 0)^{\frac{\nu-1}{\nu}}\right] \geq 1 - \frac{\nu-1}{\nu}\cdot \epsilon^{\prime}. 
\end{align}

Then, it follows from the Jensen's inequality\footnote{Note that 
$\vE[Z]^{\frac{\nu-1}{\nu}} \geq \vE[Z^{\frac{\nu-1}{\nu}}]$ since 
$f(Z)=Z^{\frac{\nu-1}{\nu}}$ is a concave function for $0\leq Z \leq 1$.} 
that 
\begin{align}
\left\{\vE_{X^{n}\mid \theta_{0}}\left[\delta^{*}_{n}(X^{n}, 0)\right] \right\}^{\frac{\nu-1}{\nu}} 
&\geq \vE_{X^{n}\mid \theta_{0}}\left[\delta^{*}_{n}(X^{n}, 0)^{\frac{\nu-1}{\nu}}\right] \\ 
&\geq 1 - \frac{\nu-1}{\nu}\cdot \epsilon^{\prime}.
\end{align}

Thus, we have
\begin{align}
\vE_{X^{n}\mid \theta_{0}}\left[\delta^{*}_{n}(X^{n}, 0)\right] &\geq \left( 1 - \frac{\nu-1}{\nu}\cdot \epsilon^{\prime} \right)^{\frac{\nu}{\nu-1}} \\ 
&\overset{(a)}{\geq} 1-\epsilon^{\prime}, 
\end{align}
where 
\begin{itemize}
\item $(a)$ follows from $(1-\gamma x)^{1/\gamma} \geq 1-x$ for $0<\gamma\leq 1, 0\leq x\leq 1$. 
\end{itemize}

From the AEP, $\vE_{X^{n}\mid \theta_{0}}\left[\delta^{*}_{n}(X^{n}, 0)\right]$ can be upper bounded as follows: 
\begin{align}
&\vE_{X^{n}\mid \theta_{0}}\left[\delta^{*}_{n}(X^{n}, 0)\right] \notag \\ 
&=  \sum_{x^{n}\in  A_{\epsilon^{\prime}}^{(n)}(p_{0} || p_{1})} p_{X^{n}\mid \theta}(x^{n}\mid \theta_{0})\delta^{*}_{n}(x^{n}, 0) \notag \\ 
&\qquad + \sum_{x^{n}\in  A_{\epsilon^{\prime}}^{(n)^{c}}(p_{0} || p_{1})} p_{X^{n}\mid \theta}(x^{n}\mid \theta_{0})\delta^{*}_{n}(x^{n}, 0) \\ 
&\leq \sum_{x^{n}\in A_{\epsilon^{\prime}}^{(n)}(p_{0} || p_{1})} p_{X^{n}\mid \theta}(x^{n}\mid \theta_{0})\delta^{*}_{n}(x^{n}, 0) \notag \\ 
&\qquad + \underbrace{\sum_{x^{n}\in A_{\epsilon^{\prime}}^{(n)^{c}}(p_{0} || p_{1})} p_{X^{n}\mid \theta}(x^{n}\mid \theta_{0})}_{\leq \epsilon^{\prime}} \\ 
&\leq \sum_{x^{n}\in A_{\epsilon^{\prime}}^{(n)}(p_{0} || p_{1})} p_{X^{n}\mid \theta}(x^{n}\mid \theta_{0})\delta^{*}_{n}(x^{n}, 0) + \epsilon^{\prime}. 
\end{align}
Therefore, $\sum_{x^{n}\in A_{\epsilon^{\prime}}^{(n)}(p_{0} || p_{1})} p_{X^{n}\mid \theta}(x^{n}\mid \theta_{0})\delta^{*}_{n}(x^{n}, 0) \geq 1-2\epsilon^{\prime}$. 
\end{proof}

Next, $\bar{\beta}_{{\nu}}(\theta_{1}, \delta^{*}_{n})$ can be lower bounded as follows:
\begin{align}
&\bar{\beta}_{{\nu}}(\theta_{1}, \delta^{*}_{n}) \notag \\ 
&:= \frac{\nu}{\nu-1} \left\{1 - \sum_{x^{n}} p_{X^{n}\mid \theta}(x^{n}\mid \theta_{1})\delta^{*}_{n}(x^{n}, 1)^{\frac{\nu-1}{\nu}} \right\} \\ 
&= \frac{\nu}{\nu-1} \left\{1 - \sum_{x^{n}} p_{X^{n}\mid \theta}(x^{n}\mid \theta_{1})(1-\delta^{*}_{n}(x^{n}, 0))^{\frac{\nu-1}{\nu}}\right\} \\ 
&\overset{(b)}{\geq} \frac{\nu}{\nu-1} \left\{1 + \sum_{x^{n}}p_{X^{n}\mid \theta}(x^{n}|\theta_{1}) \left( -1 + \frac{\nu-1}{\nu} \delta^{*}_{n}(x^{n}, 0) \right) \right\} \\ 
&= \sum_{x^{n}} p_{X^{n}\mid \theta}(x^{n}\mid \theta_{1})\delta^{*}_{n}(x^{n}, 0), 
\end{align}
where 
\begin{itemize}
\item $(b)$ follows from $(1-x)^{\gamma} \leq 1-\gamma x$, for $0<\gamma\leq 1, 0\leq x\leq 1$. 
\end{itemize}

Making use of the result in Lemma \ref{lem:1-2epsilon_prime} and AEP, we have 
\begin{align*}
&\sum_{x^{n}} p_{X^{n}\mid \theta}(x^{n}\mid \theta_{1})\delta^{*}_{n}(x^{n}, 0) \\ 
&\geq \sum_{x^{n}\in  A_{\epsilon^{\prime}}^{(n)}(p_{0} || p_{1})} \underbrace{p_{X^{n}\mid \theta}(x^{n}\mid \theta_{1})}_{\qquad \geq p_{X^{n}\mid \theta}(x^{n}\mid \theta_{0})2^{-n (D(p_{0}||p_{1}) + \epsilon^{\prime})}}\delta^{*}_{n}(x^{n}, 0) \\ 
&\geq 2^{-n (D(p_{0} || p_{1}) + \epsilon^{\prime})} \sum_{x^{n}\in A_{\epsilon^{\prime}}^{(n)}(p_{0} || p_{1})} p_{x^{n}\mid \theta}(x^{n}\mid \theta_{0})\delta^{*}_{n}(x^{n}, 0) \\ 
&\geq 2^{-n (D(p_{0} || p_{1}) + \epsilon^{\prime})}(1-2\epsilon^{\prime}).
\end{align*}
Therefore, we have $-\liminf_{n\to \infty} \frac{1}{n}\log \bar{\beta}_{{\nu}}(\theta_{1}, \delta^{*}_{n}) \leq D(p_{0} || p_{1}) + \epsilon^{\prime}$.   
Since $\epsilon^{\prime}>0$ is arbitrary, we can conclude that $B_{\nu, \epsilon}\leq D(p_{0} || p_{1})$.
\end{proof}

\section{Proof of Theorem \ref{thm:nu_bayes_opt_exponent}} \label{proof:nu_bayes_opt_exponent}

\begin{proof}
From Proposition \ref{prop:opt_bayes_ht}, for $\nu = \infty$, the $\infty$-Bayesian error probability is given as 
\begin{align}
&r(\delta_{n}^{*, \text{Bayes}}) = \pi_{0}\sum_{x^{n}\in \textsf{A}(\delta_{n}^{*, \text{Bayes}})}p_{X^{n}\mid \theta}(x^{n}\mid \theta_{0}) \notag \\ 
&\qquad + \pi_{1}\sum_{x^{n}\in \textsf{A}(\delta_{n}^{*, \text{Bayes}})^{c}}p_{X^{n}\mid \theta}(x^{n}\mid \theta_{1}),
\end{align}
where $\textsf{A}(\delta_{n}^{*, \text{Bayes}}):= \left\{\, x^{n}\in \mathcal{X}^{n} \relmiddle{|} \pi(\theta_{0}\mid x^{n})\geq \pi(\theta_{1}\mid x^{n}) \right\}$. 
Thus the problem of characterizing $\infty$-Bayesian error exponent  $D_{\infty}^{*}$ is 
equivalent to the classical problem of characterizing the Bayesian error exponent $D^{*}$ defined in \eqref{eq:Bayesian_error_exponent}. 
Therefore, $D_{\infty}^{*} = C(p_{X\mid \theta_{0}}, p_{X\mid \theta_{1}})$ (see \cite{Cover:2006:EIT:1146355}). 

For $\nu\in (1, \infty)$, the $\nu$-Bayesian error probability is given, and upper bounded as follows: 
\begin{align}
&r(\delta_{n}^{*, \text{Bayes}})= \pi_{0}\cdot \frac{\nu}{\nu-1} \cdot \sum_{x^{n}}p_{X^{n}\mid \theta}(x^{n}\mid \theta_{0})  \notag \\
&\times \left\{1 - \left( \frac{\pi(\theta_{0}\mid x^{n})^{\nu}}{\pi(\theta_{0}\mid x^{n})^{\nu} + \pi(\theta_{1}\mid x^{n})^{\nu}} \right)^{\frac{\nu-1}{\nu}} \right\} \notag \\ 
&+ \pi_{1}\cdot \frac{\nu}{\nu-1}\cdot \sum_{x^{n}}p_{X^{n}\mid \theta}(x^{n}\mid \theta_{1}) \notag \\ 
&\times \left\{1 - \left( \frac{\pi(\theta_{1}\mid x^{n})^{\nu}}{\pi(\theta_{0}\mid x^{n})^{\nu} + \pi(\theta_{1}\mid x^{n})^{\nu}} \right)^{\frac{\nu-1}{\nu}} \right\} \\ 
&\overset{(a)}{\leq} \pi_{0}\cdot \frac{\nu}{\nu-1} \cdot \sum_{x^{n}}p_{X^{n}\mid \theta}(x^{n}\mid \theta_{0})  \notag \\
&\times \left\{1 - \frac{\pi(\theta_{0}\mid x^{n})^{\nu}}{\pi(\theta_{0}\mid x^{n})^{\nu} + \pi(\theta_{1}\mid x^{n})^{\nu}}  \right\} \notag \\ 
&+ \pi_{1}\cdot \frac{\nu}{\nu-1}\cdot \sum_{x^{n}}p_{X^{n}\mid \theta}(x^{n}\mid \theta_{1}) \notag \\ 
&\times \left\{1 - \frac{\pi(\theta_{1}\mid x^{n})^{\nu}}{\pi(\theta_{0}\mid x^{n})^{\nu} + \pi(\theta_{1}\mid x^{n})^{\nu}}  \right\} \\ 
&= \frac{\nu}{\nu-1}\cdot \sum_{x^{n}} \frac{\pi_{0}p_{X^{n}\mid \theta}(x^{n}\mid \theta_{0})\pi_{1}^{\nu} p_{X^{n}\mid \theta}(x^{n}\mid \theta_{1})^{\nu}}{\pi_{0}^{\nu} p_{X^{n}\mid \theta}(x^{n}\mid \theta_{0})^{\nu}+\pi_{1}^{\nu} p_{X^{n}\mid \theta}(x^{n}\mid \theta_{1})^{\nu}} \notag \\ 
&+ \frac{\nu}{\nu-1}\cdot  \sum_{x^{n}} \frac{\pi_{1}p_{X^{n}\mid \theta}(x^{n}\mid \theta_{1})\pi_{0}^{\nu} p_{X^{n}\mid \theta}(x^{n}\mid \theta_{0})^{\nu}}{\pi_{0}^{\nu} p_{X^{n}\mid \theta}(x^{n}\mid \theta_{0})^{\nu}+\pi_{1}^{\nu} p_{X^{n}\mid \theta}(x^{n}\mid \theta_{1})^{\nu}} \\ 
&= \frac{\nu}{\nu-1}\cdot \sum_{x^{n}} \frac{1}{\frac{\pi_{0}^{\nu-1} p_{X^{n}\mid \theta}(x^{n}\mid \theta_{0})^{\nu-1}}{\pi_{1}^{\nu}p_{X^{n}\mid \theta}(x^{n}\mid \theta_{1})^{\nu}} + \frac{1}{\pi_{0}p_{X^{n}\mid \theta}(x^{n}\mid \theta_{0})}} \notag \\ 
&+ \frac{\nu}{\nu-1}\cdot \sum_{x^{n}} \frac{1}{\frac{\pi_{1}^{\nu-1} p_{X^{n}\mid \theta}(x^{n}\mid \theta_{1})^{\nu-1}}{\pi_{0}^{\nu}p_{X^{n}\mid \theta}(x^{n}\mid \theta_{0})^{\nu}} + \frac{1}{\pi_{1}p_{X^{n}\mid \theta}(x^{n}\mid \theta_{1})}} \\ 
&\overset{(b)}{\leq} \frac{1}{2}\cdot \frac{\nu}{\nu-1} \sum_{x^{n}} (\pi_{1}p_{X^{n}\mid\theta}(x^{n}\mid \theta_{1}))^{\frac{\nu}{2}} (\pi_{0}p_{X^{n}\mid \theta}(x^{n}\mid \theta_{0}))^{1-\frac{\nu}{2}} \notag \\ 
&+ \frac{1}{2}\cdot \frac{\nu}{\nu-1} \sum_{x^{n}} (\pi_{0}p_{X^{n}\mid\theta}(x^{n}\mid \theta_{0}))^{\frac{\nu}{2}} (\pi_{1}p_{X^{n}\mid \theta}(x^{n}\mid \theta_{1}))^{1-\frac{\nu}{2}} \\ 
&= \frac{1}{2}\cdot \frac{\nu}{\nu-1}\cdot \pi_{1}^{\frac{\nu}{2}}\pi_{0}^{1-\frac{\nu}{2}}\sum_{x^{n}} \prod_{i=1}^{n}p_{X\mid \theta}(x_{i}\mid \theta_{1})^{\frac{\nu}{2}} p_{X\mid \theta}(x_{i}\mid \theta_{0})^{1-\frac{\nu}{2}} \notag \\ 
&+ \frac{1}{2}\cdot \frac{\nu}{\nu-1}\cdot \pi_{0}^{\frac{\nu}{2}}\pi_{1}^{1-\frac{\nu}{2}}\sum_{x^{n}} \prod_{i=1}^{n}p_{X\mid \theta}(x_{i}\mid \theta_{0})^{\frac{\nu}{2}} p_{X\mid \theta}(x_{i}\mid \theta_{1})^{1-\frac{\nu}{2}} \\ 
&= \frac{1}{2}\cdot \frac{\nu}{\nu-1}\cdot \pi_{1}^{\frac{\nu}{2}}\pi_{0}^{1-\frac{\nu}{2}} 
\left( \sum_{x} p_{X\mid \theta}(x\mid \theta_{1})^{\frac{\nu}{2}}p_{X\mid \theta}(x\mid \theta_{0})^{1-\frac{\nu}{2}} \right)^{n} \notag \\ 
&+ \frac{1}{2}\cdot \frac{\nu}{\nu-1}\cdot \pi_{0}^{\frac{\nu}{2}}\pi_{1}^{1-\frac{\nu}{2}} 
\left( \sum_{x} p_{X\mid \theta}(x\mid \theta_{0})^{\frac{\nu}{2}}p_{X\mid \theta}(x\mid \theta_{1})^{1-\frac{\nu}{2}} \right)^{n} \\ 
&\overset{(c)}{\leq} \frac{\nu}{\nu-1} \max \Bigl\{\sum_{x} p_{X\mid \theta}(x\mid \theta_{1})^{\frac{\nu}{2}}p_{X\mid \theta}(x\mid \theta_{0})^{1-\frac{\nu}{2}}, \notag \\  
&\qquad \qquad  \qquad \sum_{x} p_{X\mid \theta}(x\mid \theta_{0})^{\frac{\nu}{2}}p_{X\mid \theta}(x\mid \theta_{1})^{1-\frac{\nu}{2}} \Bigr\}^{n}, 
\end{align}
where 
\begin{itemize}
\item $(a)$ follows from $1-x^{\gamma} \leq 1-x$ for $0\leq\gamma\leq1$, 
\item $(b)$ follows from $\text{(harmonic mean)}\leq \text{(geometric mean)}$,
\item $(c)$ follows from $a^{n} + b^{n} \leq 2\max\{a, b\}^{n}$ for $a,b\geq 0$ and $\pi_{0}, \pi_{1}\leq 1$.
\end{itemize}
Therefore, 
\begin{align}
D_{\nu}^{*} &= -\lim_{n\to \infty} \frac{1}{n} \log r(\delta_{n}^{*, \text{Bayes}}) \\ 
&\geq -\log \max\{\text{BC}_{{\nu}/{2}}, \text{BC}_{1-{\nu}/{2}}\}.
\end{align}
For $\nu=1$, it can be proved in a similar way as $\nu\in (1, \infty)$ by using the inequality 
$1-1/x\leq \log x$.
\end{proof}

% Generated by IEEEtran.bst, version: 1.14 (2015/08/26)

\end{document}